\newif\ifsariel
   \numberwithin{figure}{section}
   \numberwithin{table}{section}
   \numberwithin{equation}{section}
   \newcommand{\aftermathA}{\par\vspace{-\baselineskip}}
    \newcommand{\qedhere}{\ifmmode\qed\else\hfill\proofSymbol\fi}
   \newtheorem{theorem}{Theorem}[section]
   \newtheorem{lemma}[theorem]{Lemma}%
   {\theorembodyfont{\rm} \newtheorem{observation}[theorem]{Observation}}
   \newenvironment{proof}[1][Proof]{\trivlist\item[]\emph{#1}:{}\xspace}%
   {\unskip\nobreak\hskip 1em plus 1fil\nobreak%
      \myqedsymbol
      \parfillskip=0pt%
      \endtrivlist}
   \newcommand{\myqedsymbol}{\rule{2mm}{2mm}}
   \numberwithin{figure}{section}
   \numberwithin{table}{section}
   \numberwithin{equation}{section}
   \definecolor{blue25}{rgb}{0,0,0.55}%
   \newcommand{\emphi}[1]{\emphic{#1}{#1}}   
   \newcommand{\etal}{\textit{et~al.}\xspace}
   \newtheorem{theorem}{Theorem}
   \newtheorem{lemma}[theorem]{Lemma}
   \newtheorem{observation}{Observation}
  \def\section{\@startsection {section}{1}{\z@}%
    {-3.5ex plus -1ex
      minus -.2ex}{2.3ex plus .2ex}{\large\bf}}
  \def\subsection{\@startsection{subsection}{2}%
    {\z@}{-3.25ex plus
      -1ex minus -.2ex}{1.5ex plus .2ex}{\normalsize\bf}}
  \def\@fnsymbol#1{\ensuremath{\ifcase#1\or *\or 1\or 2\or 3\or 4\or
      5\or 6\or 7 \or 8\ or 9 \or 10\or 11 \else\@ctrerr\fi}}
  \newcommand{\emphi}[1]{\emph{#1}}
  \newcommand{\etal}{{et~al.}\xspace}
  \let\aftermathA\relax
\let\geq\geqslant
\let\leq\leqslant
\newcommand{\atgen}{\symbol{'100}}
\newcommand{\grapro}{\mathcal{P}}
\newcommand{\edgepro}{\mathcal{P}^{*}}
\newcommand{\Toth}{T\'{o}th\xspace}
\newcommand{\si}[1]{#1}
\newcommand{\face}{\psi}%
\newcommand{\faceA}{\varphi}%
\newcommand{\edge}{{{e}}}
\newcommand{\vertex}{{v}}
\newcommand{\vertexA}{{u}}
\newcommand{\Graph}{{G}}
\newcommand{\GraphA}{{H}}
\newcommand{\Vertices}{{V}}
\newcommand{\Faces}{{\mathcal{F}}}
\newcommand{\Edges}{{E}}
\newcommand{\MeetB}{L}
\renewcommand{\th}{th\xspace}
\begin{document}

\title{On the Number of Edges of Fan-Crossing Free Graphs%
  \thanks{OC and HSK were supported in part by NRF
    grant~2011-0016434 and in part by NRF grant 2011-0030044
    (SRC-GAIA), both funded by the government of Korea.
    SHP was partially supported by NSF AF awards
    CCF-0915984 and CCF-1217462.}}

\author{%
  Otfried Cheong%
  \thanks{Department of Computer Science, KAIST, Daejeon, Korea.
    \si{otfried}\atgen{}kaist.edu.}%
  \and
  Sariel Har-Peled%
  \thanks{%
    Department of Computer Science, University of Illinois, Urbana,
    USA. \si{sariel}\atgen{}\si{illinois.edu}.}
  \and
  Heuna Kim%
  \thanks{Freie \si{Universit\"at} Berlin, Berlin, Germany.
    heunak\atgen{}mi.fu-berlin.de.}%
  \and
  Hyo-Sil Kim%
  \thanks{Department of Computer Science and Engineering, POSTECH,
    Pohang, Korea. hyosil.kim@\atgen{}gmail.com.}}

\maketitle

\begin{abstract}
  A graph drawn in the plane with $n$ vertices is
  \emphi{$k$-fan-crossing free} for $k \geq 2$ if there are no $k+1$
  edges $g,e_1,\dots, e_k$, such that $e_1,e_2,\dots,e_k$ have a
  common endpoint and $g$ crosses all~$e_i$.  We prove a tight bound
  of $4n-8$ on the maximum number of edges of a $2$-fan-crossing free
  graph, and a tight $4n-9$ bound for a straight-edge drawing.  For $k
  \geq 3$, we prove an upper bound of $3(k-1)(n-2)$ edges. We also
  discuss generalizations to monotone graph properties.
\end{abstract}

\section{Introduction}

A \emphi{topological graph}~$\Graph$ is a graph drawn in the plane:
vertices are points, and the edges of the graph are drawn as Jordan
curves connecting the vertices.  Edges are not allowed to pass through
vertices other than their endpoints.  We will assume the topological
graph to be \emphi{simple}, that is, any pair of its edges have at
most one point in common (so edges with a common endpoint do not
cross, and edges cross at most once). Figure~\ref{fig:bad}~(a--b) shows
configurations that are not allowed.

If there are no crossings between edges, then the graph is planar, and
Euler's formula implies that it has at most $3n-6$ edges, where $n$ is
the number of vertices.  What can be said if we relax this
restriction---that is, we permit some edge crossings?

For instance, a topological graph is called \emphi{$k$-planar} if each
edge is crossed at most $k$~times.  Pach and \Toth~\cite{pt-gdfce-97}
proved that a $k$-planar graph on~$n$ vertices has at most
$(k+3)(n-2)$ edges for $0 \leq k \leq 4$, and at most~$4.108
\sqrt{k}n$ edges for general~$k$.  The special case of $1$-planar
graphs has recently received some attention, especially in the graph
drawing community.  Pach and \Toth's bound is~$4n-8$, and this is
tight: starting with a planar graph~$H$ where every face is a
quadrilateral, and adding both diagonals results in a $1$-planar graph
with~$4n-8$ edges.  However, Didimo~\cite{d-dslop-13} showed that
\emph{straight-line} $1$-planar graphs have at most~$4n-9$ edges,
showing that F\'{a}{r{}y}'s theorem does not generalize to $1$-planar
graphs.  This bound is tight, as he constructed an infinite family of
straight-line $1$-planar graphs with $4n-9$~edges.  Hong
\etal~\cite{help-ftopg-12} characterize the $1$-planar graphs that can
be drawn as straight-line $1$-planar graphs.  Grigoriev and
Bodlaender~\cite{gb-agefc-07} showed that testing if a given graph is
$1$-planar is NP-hard.

A topological graph is called \emphi{$k$-quasi planar} if it does not
contain $k$~pairwise crossing edges.  It is conjectured that for any
fixed~$k$ the number of edges of a $k$-quasi planar graph is linear in
the number of vertices~$n$. Agarwal \etal~\cite{aapps-qpglne-1997}
proved this for straight-line $3$-planar graphs, Pach
\etal~\cite{prt-rptg-02} for general $3$-planar graphs,
Ackerman~\cite{a-mnetg-09} for~$4$-planar graphs, and Fox
\etal~\cite{fp-nekqp-13} prove a bound of the form~$O(n
\log^{1+o(1)}n)$ for $k$-planar graphs.

A different restriction on crossings arises in graph drawing: Humans
have difficulty reading graph drawings where edges cross at acute
angles, but graph drawings where edges cross at right angles are
nearly as readable as planar ones. A \emph{right-angle crossing graph}
(RAC graph) is a topological graph with straight edges where edges
that cross must do so at right angle.  Didimo
\etal~\cite{del-dgrac-11} showed that an RAC graph on $n$~vertices has
at most~$4n-10$ edges.  Testing whether a given graph is an RAC graph
is NP-hard~\cite{abs-slrdp-12}.  Eades and Liotta~\cite{el-racgo-13}
showed that an extremal RAC graph, that is, an RAC graph with $n$
vertices and $4n - 10$ edges, is $1$-planar, and is the union of two
maximal planar graphs sharing the same vertex set.

A \emphi{radial $(p,q)$-grid} in a graph~$\Graph$ is a set of $p+q$
edges such that the first $p$~edges are all incident to a common
vertex, and each of the first $p$~edges crosses each of the remaining
$q$~edges.  Pach \etal \cite{ppst-tgnlg-05} proved that a graph
without a radial $(p,q)$-grid, for $p,q \geq 1$, has at most $8 \cdot
24^{q} pn$ edges.

We will call a radial $(k,1)$-grid a \emphi{$k$-fan crossing}.  In
other words, a fan crossing is formed by an edge~$g$ crossing
$k$~edges~$e_1,\dots,e_k$ that are all incident to a common vertex,
see Figure~\ref{fig:bad}~(c). A topological graph is \emphi{$k$-fan-crossing
   free} if it does not contain a $k$-fan crossing. We are
particularly interested in the special case $k=2$. For brevity, let us
call a $2$-fan crossing simply a \emphi{fan crossing} (shown in
Figure~\ref{fig:bad}~(d)), and a $2$-fan-crossing free graph a
\emphi{fan-crossing free graph}.
\begin{figure}[t]
    \centerline{%
       \hfill%
       \includegraphics[page=1]{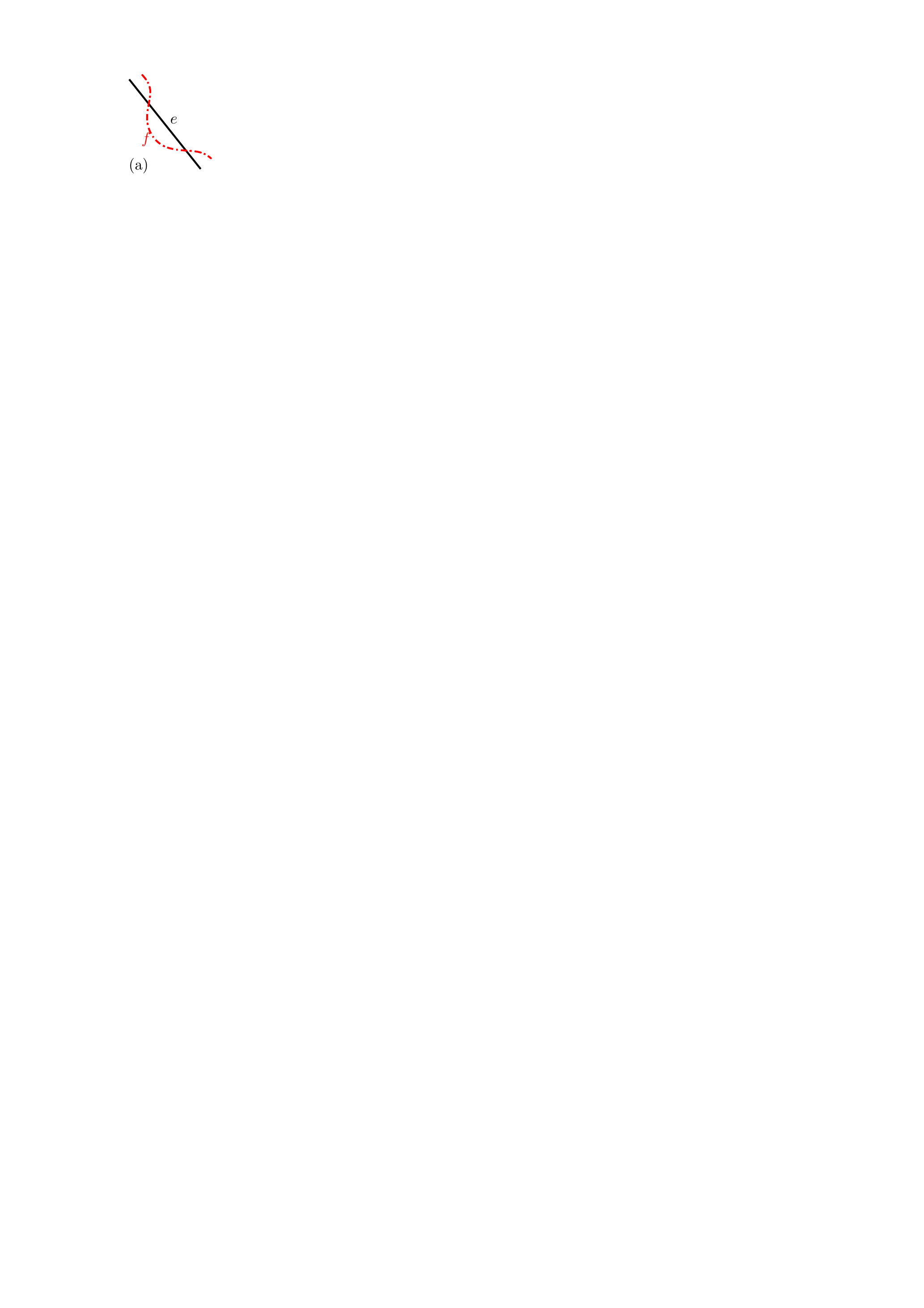}%
       \hfill
       \includegraphics[page=2]{figs/not_allow}
       \hfill
       \includegraphics[page=2]{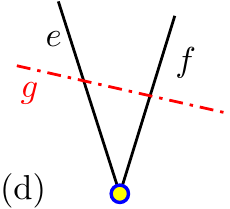}
       \hfill
       \includegraphics[page=1]{figs/fan_crossing}%
       \hfill}
    \caption{(a) and (b) shows illegal embeddings of edges of a
       graph. (c) is a $4$-fan crossing, (d) a fan crossing.}
    \label{fig:bad}
\end{figure}

By Pach \etal{}'s result, a $k$-fan-crossing free graph on
$n$~vertices has at most $192kn$~edges, and a fan-crossing free graph
has therefore at most $384n$~edges.  We improve this bound by proving
the following theorem.
\begin{theorem}%
    \label{theo:main}%
    A fan-crossing free graph on~$n \geq 3$ vertices has at
    most~$4n-8$ edges.  If the graph has straight edges, it has at
    most $4n-9$ edges.  Both bounds are tight for~$n \geq 10$.
\end{theorem}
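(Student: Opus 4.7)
The plan is to derive the upper bound $4n-8$ by applying Euler's formula to the planarization of $G$ and then bounding the number of crossings using the fan-crossing free hypothesis. The straight-line bound $4n-9$ comes from an additional geometric rigidity argument, and tightness is handled by explicit constructions.

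For the upper bound, let $G$ be a simple topological fan-crossing free graph on $n$ vertices with $X$ crossings. Its planarization $G^{\times}$, obtained by replacing each crossing by a degree-$4$ vertex, is a plane graph with $n+X$ vertices and $|E(G)|+2X$ edges. Since $G$ is simple, every face of $G^{\times}$ has at least three sides, and Euler's formula yields $|E(G)| \leq 3n-6+X$, so it suffices to prove $X \leq n-2$. The key structural observation enabling this is that, for every edge $e$ of $G$, the set of edges crossing $e$ has pairwise disjoint endpoints: otherwise two of them together with $e$ would form a forbidden $2$-fan crossing. I would therefore assign to each crossing $c$ a distinct \emph{witness} drawn from a set of size at most $n-2$---for instance, one of the four corner-faces of $G^{\times}$ incident to $c$, traced along its boundary until it meets a distinguished original vertex or planar-subgraph feature. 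Fan-crossing freeness, combined with the matching structure around each edge, should force the assignment to be injective, giving $X \leq n-2$ and hence $|E(G)| \leq 4n-8$.

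For the straight-line improvement, assume towards a contradiction that $G$ has straight edges and exactly $4n-8$ edges. All inequalities above must then be tight: $G^{\times}$ is a triangulation with exactly $n-2$ crossing vertices, and the witness map is a bijection. A case analysis of the local structure at each crossing, combined with the geometric constraint that both crossing edges are line segments, should produce a contradiction such as a convex quadrilateral whose diagonals and sides impose incompatible cyclic orders, yielding $|E(G)| \leq 4n-9$.

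For tightness, the $4n-8$ bound is matched by starting with a planar quadrangulation on $n \geq 10$ vertices (which has $2n-4$ edges) and inserting both diagonals into each quadrilateral face. Each inserted diagonal is crossed only by the other diagonal in its face, so the result is $1$-planar and hence fan-crossing free, with exactly $4n-8$ edges. The $4n-9$ straight-line bound is matched by adapting Didimo's extremal straight-line $1$-planar construction, after verifying that the modification preserves fan-crossing freeness. The main obstacle in the whole proof is the witness argument for the upper bound: naive endpoint charging fails because a single edge can be crossed many times, so one must cleverly exploit the matching structure around each edge together with the planarity of $G^{\times}$. The geometric contradiction in the straight-line case is also delicate, since one must rule out every near-extremal configuration that could conceivably be realized with line segments.
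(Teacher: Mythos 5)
Your reduction to the claim $X \leq n-2$, where $X$ is the number of crossings in the drawing, is the fatal step: that claim is false for fan-crossing free graphs. Fan-crossing freeness only forces the edges crossing a fixed edge $e$ to form a matching; it does not bound how many times $e$ is crossed, and so does not bound $X$ linearly. Concretely (this example appears in Section~\ref{sec:conclusions} of the paper), take $K_q$ and subdivide every edge twice, placing the subdivision vertices very close to the original vertices; the resulting drawing is fan-crossing free, has $n = q + 2\binom{q}{2} = \Theta(q^2)$ vertices, and has $\Omega(q^4)=\Omega(n^2)$ crossings. Hence no injective assignment of crossings to a set of size $n-2$ can exist, and the (correct) inequality $|E| \leq 3n-6+X$ cannot be combined with a linear bound on $X$. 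The planarization-plus-Euler route works for $1$-planar graphs precisely because there each edge is crossed at most once, so $X \leq |E|/2$; nothing analogous holds here. The straight-line refinement and the ``all inequalities are tight'' analysis inherit this gap, since they presuppose the broken count.

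The paper's argument is structured quite differently, and you would need its key lemma (or an equivalent) to repair the proof: fix a maximal planar subgraph $H$ of $G$, replace each of the remaining edges by two ``arrows'' emanating from its endpoints into the incident faces of $H$, and prove the purely combinatorial fact that a face of $H$ bounded by a single chain of $m$ edges carries at most $3m-8$ arrows (Lemma~\ref{lemma:mstar}, proved by induction on a shortest long arrow together with a separate witness count for length-one arrows; Lemma~\ref{lemma:face} handles faces with several boundary chains via a bridge construction). Summing over faces with Euler's formula then gives $2|E| \leq 8n-16$, and the straight-line bound follows because an extremal graph would force the unbounded triangular face of $H$ to contain a straight arrow, which is impossible. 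Your tightness constructions point in the right direction (quadrangulations with both diagonals for $n \geq 10$, and a quadrangulation with two triangular faces for the straight-line case), but note the paper's caveat that the quadrangulation must be chosen so that diagonals of distinct faces never join the same pair of vertices, since otherwise the resulting graph is not simple --- this is exactly why no extremal example exists for $n \in \{7,9\}$, and the paper sidesteps it with spherically convex quadrangulations.
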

A $1$-planar graph is fan-crossing free, so Theorem~\ref{theo:main} generalizes
both Pach and \Toth's and Didimo's bound.  We also extend their lower
bounds by giving tight constructions for every value of~$n$.

In an RAC graph all edges crossed by a given edge~$g$ are orthogonal
to it and therefore parallel to each other, implying that an RAC graph
is fan-crossing free.  Our theorem, therefore, ``nearly'' implies
Didimo \etal's bound: a fan-crossing free graph has at most one edge
more than an RAC graph.

We can completely characterize extremal fan-crossing free graphs, that
is, fan-crossing free graphs on $n$~vertices with $4n-8$ edges: Any
such graph consists of a planar graph~$H$ where each face is a
quadrilateral, together with both diagonals for each face.  This
implies the same properties obtained by Eades and Liotta for extremal
RAC graphs: An extremal fan-crossing free graph is $1$-planar, and is
the union of two maximal planar graphs.

For $k$-fan-crossing free graphs with $k \geq 3$, we obtain the
following result.
\begin{theorem}
    \label{theo:k-main}%
    A $k$-fan-crossing free graph on~$n \geq 3$ vertices has at
    most~$3(k-1)(n-2)$ edges, for $k \geq 3$.
\end{theorem}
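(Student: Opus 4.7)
My plan is to prove Theorem~\ref{theo:k-main} by induction on $k \geq 3$, peeling off a planar subgraph at each stage. The central structural lemma I would aim to establish is: for $k \geq 4$, every $k$-fan-crossing free topological graph $G$ on $n$ vertices contains a planar subgraph $H \subseteq G$ such that $G \setminus H$ is $(k-1)$-fan-crossing free. Given this lemma, the theorem follows inductively, since $|E(G)| \leq |E(H)| + |E(G \setminus H)| \leq (3n - 6) + 3(k-2)(n-2) = 3(k-1)(n-2)$ by the planar bound applied to $H$ and the inductive hypothesis applied to $G \setminus H$.

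The base case $k = 3$ has to be treated directly, since Theorem~\ref{theo:main} for $k = 2$ only yields $4n - 8$, which would weaken the induction to $(3k-2)(n-2)$ rather than the claimed $3(k-1)(n-2)$. To establish $|E(G)| \leq 6(n-2)$ for a $3$-fan-crossing free graph $G$, I would aim to decompose $G$ into two edge-disjoint planar spanning subgraphs, exploiting the fact that in a $3$-fan-crossing free drawing the edges crossing any fixed edge $g$ have each vertex serving as an endpoint at most twice (no three share an endpoint), and so form a graph of maximum degree~$2$, i.e.\ a disjoint union of paths and cycles. This strong local constraint on the crossing structure is what should allow the desired two-planar decomposition.

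The main obstacle lies in proving the structural lemma. Writing $C(v, g)$ for the set of edges incident to $v$ that $g$ crosses, we need $H \cap C(v, g) \neq \emptyset$ for every pair $(v, g)$ with $|C(v, g)| = k - 1$ (the maximum permitted by hypothesis), while $H$ itself must be planar in the inherited drawing. A natural construction picks, for each such maximal fan crossing, the edge of $C(v, g)$ that is outermost in the cyclic order of edges around $v$ on the side from which $g$ approaches; the aim is then to argue topologically that these outermost selections, taken over all vertices, fit together into a planar subgraph. Proving this planarity is the crux of the argument; if a direct topological argument resists, a fallback strategy is to show via counting or an averaging argument the existence of \emph{some} planar hitting set for the family $\{\, C(v, g) : |C(v, g)| = k - 1 \,\}$, rather than insisting on the canonical outermost choice.
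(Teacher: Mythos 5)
Your proposal reduces the theorem to two structural claims, neither of which you prove, and neither of which is at all routine; as it stands there is a genuine gap at both ends of the induction. For the base case $k=3$, the observation that the edges crossing a fixed edge $g$ form a graph of maximum degree two is correct, but it is a purely local statement about each $G_g$, and nothing in the proposal converts it into the global claim that $G$ is the edge-disjoint union of two planar subgraphs. That decomposition claim is strictly stronger than the bound $6(n-2)$ itself (it is the kind of statement Eades and Liotta prove only for \emph{extremal} RAC graphs), and no mechanism is offered for producing the two planar layers. For the inductive step, the lemma that every $k$-fan-crossing free $G$ contains a planar $H$ with $G\setminus H$ being $(k-1)$-fan-crossing free is the entire difficulty, and the proposed construction does not survive scrutiny: the edges of $C(v,g)$ need not be consecutive in the rotation around $v$ (edges at $v$ missed by $g$ can interleave with those it crosses), so ``outermost on the side from which $g$ approaches'' is not well defined; and even granting some canonical choice, selections made at different vertices, or at the same vertex for different crossing edges $g$, can cross one another, so there is no reason the union is planar. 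The ``fallback'' of an averaging argument for a planar hitting set is only a restatement of the problem. Note also that a hitting-set $H$ must handle the case where the offending edge $g$ itself could be placed in $H$ instead of a fan edge, which your formulation $H\cap C(v,g)\neq\emptyset$ ignores; this is minor compared to the planarity issue but signals that the lemma has not been pinned down.

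For contrast, the paper does not iterate over $k$ at all: it fixes one maximal planar subgraph $H\subseteq G$, replaces each leftover edge by two ``arrows'' living in faces of $H$, and proves a per-face bound on the number of arrows by an induction on the face complexity (Lemma~\ref{lemma:k-mstar}), with a careful bookkeeping of heavy, light, and void boundary vertices to control the cost of bridging multiple boundary chains. Summing the face bounds with Euler's formula gives $3(k-1)(n-2)$. The authors even remark that $3(k-1)(n-2)$ is exactly the limit of this one-planar-layer-plus-arrows accounting, which suggests that if your two structural lemmas were provable they would give a genuinely different (and arguably cleaner) proof; but as submitted, both lemmas are conjectures, not steps.
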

This bound is not tight, and the best lower-bound construction we are
aware of has only about $kn$~edges.

Most of the graph families discussed above have a common pattern: the
subgraphs obtained by taking the edges crossed by a given edge~$\edge$
may not contain some forbidden subgraph. We can formalize this notion
as follows: For a topological graph~$\Graph$ and an edge $\edge$
of~$\Graph$, let~$\Graph_\edge$ denote the subgraph of $\Graph$
containing exactly those edges that cross~$\edge$.

A graph property~$\grapro$ is called \emphi{monotone} if it is
preserved under edge-deletions.  In other words, if $\Graph$ has
$\grapro$ and $\Graph'$ is obtained from $\Graph$ by deleting edges,
then $\Graph'$ must have~$\grapro$.  Given a monotone graph
property~$\grapro$, we define a \emphi{derived graph property}
$\edgepro$ as follows: A topological graph $\Graph$ has $\edgepro$ if
for every edge~$\edge$ of~$\Graph$ the subgraph~$\Graph_\edge$
has~$\grapro$.  Some examples are:
\begin{compactitem}
    \item If $\grapro$ is the property that a graph does not contain a
    path of length two, then $\edgepro$ is the property of being
    fan-crossing free;
    \item if $\grapro$ is the property of having at most $k$ edges,
    then $\edgepro$ is $k$-planarity;
    \item if $\grapro$ is planarity, then $\edgepro$ is
    3-quasi-planarity.
\end{compactitem}
We can consider $\edgepro$ for other interesting properties~$\grapro$,
such as not containing a path of length~$k$, or not containing
a~$K_{2,2}$.

We prove the following very general theorem:
\begin{theorem}
    \label{theo:gmain}%
    Let $\grapro$ be a monotone graph property such that any graph on
    $n$ vertices that has~$\grapro$ has at most $O(n^{1+\alpha})$
    edges, for a constant $0 \leq \alpha \leq 1$.  Let $\Graph$ be a
    graph on $n$ vertices that has $\edgepro$.  If $\alpha > 0$, then
    $\Graph$ has $O(n^{1+\alpha})$ edges.  If $\alpha = 0$, then
    $\Graph$ has $O(n \log^{2} n)$ edges.
\end{theorem}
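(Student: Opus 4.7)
The plan is to combine the hypothesis on $\grapro$ with the Crossing Lemma to obtain a first bound on $|E(\Graph)|$, and then to bootstrap this bound down to the claimed one.

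First, I would observe that for any edge $\edge$ of a $\edgepro$-graph $\Graph$ on $n$ vertices, $\Graph_\edge$ has $\grapro$ on some number $n_\edge \le n$ of vertices, so by hypothesis $c_\edge := |E(\Graph_\edge)| \le Cn_\edge^{1+\alpha} \le Cn^{1+\alpha}$. Summing over edges and using $2\,\mathrm{cr}(\Graph) = \sum_\edge c_\edge$ yields $\mathrm{cr}(\Graph) \le \tfrac{C}{2}mn^{1+\alpha}$, where $m = |E(\Graph)|$. Comparing this with the Crossing Lemma $\mathrm{cr}(\Graph) \ge m^3/(64 n^2)$ (valid when $m \ge 4n$) gives the preliminary bound $m = O(n^{(3+\alpha)/2})$.

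The main work is to bootstrap this preliminary bound to $O(n^{1+\alpha})$ when $\alpha>0$, and to $O(n \log^2 n)$ when $\alpha = 0$. To tighten the bound on $\mathrm{cr}(\Graph)$, I would exploit the finer estimate $c_\edge \le Cn_\edge^{1+\alpha}$ together with the fact that $\sum_\edge n_\edge \le 4\,\mathrm{cr}(\Graph)$, which holds because each of the $c_\edge$ edges in $\Graph_\edge$ contributes at most two endpoints to $V(\Graph_\edge)$. Feeding a current upper bound $f(n) \le An^{1+\beta}$ on the maximum edge count of $\edgepro$-graphs back through the first step and the Crossing Lemma then produces a self-improving recurrence in which the exponent $\beta$ is contracted toward $\alpha$ at each iteration. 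An alternative, essentially equivalent, bootstrap would partition $V(\Graph)$ by a random sample and apply the inductive hypothesis to the induced subgraph.

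I expect the bootstrap to be the main obstacle. For $\alpha > 0$ the key is to verify that the fixed point of the recurrence is exactly $1+\alpha$ and that the multiplicative constants stabilize, which requires carefully tracking how the recurrence compounds between iterations. For $\alpha = 0$ the fixed point is only reached asymptotically, and each iteration pays an extra logarithmic factor; the delicate part is to argue that after $O(\log n)$ iterations the exponent has effectively shrunk to $1$ and that the accumulated logarithmic loss is precisely $\log^2 n$ and not a higher power. The edge case $m < 4n$ (where the Crossing Lemma does not apply) is handled trivially by the linear bound $m < 4n$.
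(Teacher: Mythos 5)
Your opening step coincides with the paper's: since each $\Graph_\edge$ has $\grapro$, it has $O(n^{1+\alpha})$ edges, and summing over edges gives $\chi := \mathrm{cr}(\Graph) = \tfrac12\sum_\edge c_\edge = O(mn^{1+\alpha})$. The rest, however, has a genuine gap: the bootstrap you describe does not close. Comparing $\chi = O(mn^{1+\alpha})$ with the Crossing Lemma $\chi = \Omega(m^3/n^2)$ yields $m = O(n^{(3+\alpha)/2})$, and the iteration stops there. The upper bound on $\chi$ comes from the property $\grapro$ of the \emph{crossed} subgraphs and does not improve when you assume a sharper a priori bound $m \le An^{1+\beta}$ on $\edgepro$-graphs; the lower bound scales as $m^3$; so the comparison returns the exponent $(3+\alpha)/2$ no matter what you feed in --- there is no contraction toward $1+\alpha$. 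The refinement $\sum_\edge n_\edge \le 4\chi$ is true but circular: substituting $c_\edge \le C n_\edge^{1+\alpha} \le C n^{\alpha} n_\edge$ into $2\chi = \sum_\edge c_\edge$ gives $2\chi \le 4Cn^{\alpha}\chi$, which says nothing. The random-sampling alternative fares no better: partitioning $V(\Graph)$ into $t$ random parts and applying $f(n') \le A(n')^{1+\beta}$ to each part gives $m \le t^2 f(n/t) = At^{1-\beta}n^{1+\beta}$, which is weaker for $t>1$. None of these mechanisms lowers the exponent, and for $\alpha = 0$ nothing in the outline produces a $\log^2 n$ rather than the $n^{3/2}$ of the preliminary bound.

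The missing ingredient is a divide-and-conquer driven by bisection width rather than a global application of the Crossing Lemma. The paper uses the Pach--Shahrokhi--Szegedy bound: one can delete $b$ edges with $b^2 = O\bigl(\chi + \sum_i d_i^2\bigr) = O(mn^{1+\alpha} + mn) = O(mn^{1+\alpha})$ so that every remaining component has at most $2n/3$ vertices. Recursing on the components, the number of edges cut at level $i$ is, by Cauchy--Schwarz, $O\bigl(\sqrt{mn}\,((2/3)^i n)^{\alpha/2}\bigr)$. For $\alpha>0$ this sums as a geometric series to $O(\sqrt{mn^{1+\alpha}})$, and since every edge is eventually cut, $m = O(\sqrt{mn^{1+\alpha}})$, i.e.\ $m = O(n^{1+\alpha})$. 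For $\alpha=0$ each of the $O(\log n)$ levels cuts $O(\sqrt{mn})$ edges, so $m = O(\sqrt{mn}\log n)$ and hence $m = O(n\log^2 n)$; this is where the $\log^2 n$ actually originates, not from an accumulated loss over iterations of a fixed-point scheme. You correctly flag the bootstrap as the main obstacle, but as written your proposal contains no mechanism that overcomes it.
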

This immediately covers many interesting cases. For instance, a graph
where no edge crosses a path of length~$k$, for a constant~$k$, has at
most $O(n \log^{2} n)$ edges.  Graphs where no edge crosses a
$K_{2,2}$ have at most $\Theta(n^{3/2})$ edges (and this is tight, as
there are graphs with $\Theta(n^{3/2})$ edges that do not contain
a~$K_{2,2}$, implying that no edge can cross a~$K_{2,2}$).

\paragraph{Paper organization.}

Section~\ref{sec:combi} tackles the problem in the simplest settings
involving a single simply-connected ``face'' of a fan-crossing free
graph.  Section~\ref{sec:upperbound} extends this argument to the
whole graph.  Section~\ref{sec:lowerbound} describes lower bound
constructions, and the straight-line case.  In
Section~\ref{sec:k-geq-3}, the argument is extended to the $k$-fan
case.  Section~\ref{sec:general} proves the bound for the case where
we consider the forbidden structure to be a hereditary property
defined on the intersection graphs induced by the edges. Finally,
Section~\ref{sec:conclusions} ends the paper with a discussion and
some open problems.

\section{A combinatorial puzzle: Arrows and fans}
\label{sec:combi}

At the core of our bound lies a combinatorial question that we can
express as follows: An \emphi{$m$-star} is a regular $m$-gon~$\face$
with a set of \emphi{arrows}.  An arrow is a ray starting at a vertex
of~$\face$, pointing into the interior of~$\face$, and exiting through
an edge of~$\face$.

\begin{figure}[t]
  \centerline{\hfill\includegraphics{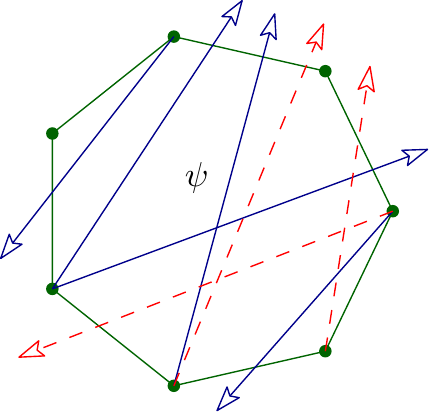}\hfill
    \includegraphics{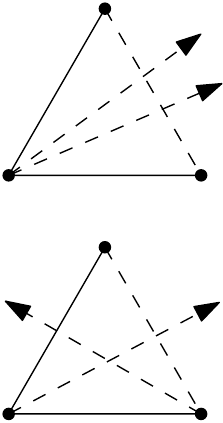}\hfill}
  \caption{Left: A $7$-star. Right: A $3$-star has at most one arrow.}
  \label{fig:mstar}
\end{figure}
We require the set of edges and arrows to be \emphi{fan-crossing
  free}---that is, no edge or arrow intersects two arrows or an edge
and an arrow incident to the same vertex.  The left side of
Figure~\ref{fig:mstar} shows a $7$-star. The dashed arrows are
impossible---each of them forms a fan crossing with the solid edges
and arrows.

The question is: \emphi{How many arrows can an $m$-star possess?}
\begin{observation}
  \label{observation:3star}
  A $3$-star has at most one arrow.
\end{observation}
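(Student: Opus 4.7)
The plan is a short case analysis on how the starting vertices of two hypothetical arrows are distributed. Label the vertices of $\face$ as $v_1,v_2,v_3$ and write $e_i$ for the edge of $\face$ opposite to $v_i$. Suppose toward contradiction that $\face$ admits two arrows $a$ and $b$. There are only two cases to consider: either $a$ and $b$ emanate from the same vertex or from two different vertices.

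First I would dispose of the case where both arrows start at the same vertex, say $v_1$. Since the unique edge of $\face$ not incident to $v_1$ is $e_1=v_2v_3$, both $a$ and $b$ are forced to exit $\face$ through $e_1$. Then $e_1$ crosses two arrows that share the endpoint $v_1$, which is precisely a forbidden fan crossing in the $m$-star sense (an edge intersecting two arrows incident to a common vertex).

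The main case is when $a$ starts at $v_1$ and $b$ at $v_2$; by relabelling these are the only possibilities. Inside $\face$ the portions of $a$ and $b$ are straight segments $v_1p$ and $v_2q$, where $p$ lies in the relative interior of $e_1=v_2v_3$ and $q$ in the relative interior of $e_2=v_1v_3$. The chord $v_1p$ splits the triangle $\face$ into two sub-triangles; $v_2$ is a vertex of one of them, while $q$, being an interior point of the side $v_1v_3$, lies on the boundary of the other. Hence the segment $v_2q$ must cross $v_1p$, i.e.\ arrow $a$ crosses arrow $b$. But $a$ also crosses the polygon edge $e_1$ at $p$, and $b$ and $e_1$ share the endpoint $v_2$; so $a$ witnesses a forbidden fan crossing at $v_2$, contradiction.

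The only mildly delicate step is the geometric claim in the second case that $a$ and $b$ cross, but this is immediate from the convexity of the triangle and the fact that the endpoints of $v_2q$ lie in the two distinct half-planes determined by the line through $v_1p$ relative to the triangle. Every other structural hypothesis about arrows and edges is used in a purely syntactic way, so the proof is essentially the case analysis above.
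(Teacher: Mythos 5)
Your proof is correct and follows the same route as the paper: same-vertex arrows both exit through the opposite edge and form a fan with it, while arrows from distinct vertices must cross and then form a fan with the boundary edge incident to the second vertex. You merely spell out the crossing claim (via convexity/interleaving of endpoints) that the paper delegates to a figure.
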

\begin{proof}
  An arrow from a vertex has to exit the triangle through the opposing
  edge, so no vertex has two arrows. But two arrows from different
  vertices will also form a fan crossing, see the right side of
  Figure~\ref{fig:mstar}.
\end{proof}
It is not difficult to see that a $4$-star possesses at most~$2$
arrows.  The reader may enjoy constructing $m$-stars with
$2m-6$~arrows, for $m \geq 4$.  We conjecture that this bound is
tight.  In the following, we will only prove a weaker bound that is
sufficient to obtain tight results for fan-crossing free graphs.

While we have posed the question in a geometric setting, it is
important to realize that it is a \emphi{purely combinatorial}
question.  We can represent the $m$-star by writing its sequence of
vertices and indicating when an arrow exits~$\face$.  Whether or not
three edges/arrows form a fan crossing can be determined from the
ordering of their endpoints along the boundary of~$\face$ alone.

Let $C=\vertex_1, \ldots, \vertex_m$ be the sequence of vertices
of~$\face$ in counter-clockwise order, such that the $i$\th boundary
edge of~$\face$ is $\edge_i = \vertex_i \vertex_{i+1}$ (all indices
are modulo~$m$).  Consider an arrow~$\edge$ starting at~$\vertex_i$.
It exits~$\face$ through some edge~$\edge_j$, splitting~$\face$ into
two chains $\vertex_{i+1}\dots\vertex_j$ and
$\vertex_{j+1}\dots\vertex_{i-1}$.  The \emphi{length} of $\edge$ is
the number of vertices on the shorter chain.

We will call an arrow \emphi{short} if it has length one. A
\emphi{long arrow} is an arrow of length larger than one.

\begin{lemma}
  \label{lemma:kgon}
  For $m \geq 4$, an $m$-star~$\face$ has at most $2m - 8$ long
  arrows.
\end{lemma}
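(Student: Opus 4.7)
The plan is to proceed by induction on $m$. For the base case $m=4$, any arrow has shorter chain of length at most $\lfloor(m-2)/2\rfloor=1$, so no long arrows exist, matching $2m-8=0$. For the inductive step ($m\geq 5$), the strategy is to contract a well-chosen polygon vertex $v_i$: remove $v_i$ and merge the two adjacent boundary edges $e_{i-1}, e_i$ into a single new edge, producing a valid $(m-1)$-star. The long arrows destroyed by the contraction are exactly of two types: (A) those with source $v_i$, which disappear entirely, and (B) those of length exactly $2$ whose shorter chain contains $v_i$, which become short arrows since their shorter chain shrinks from two vertices to one. Short arrows and long arrows of length at least $3$ keep their long/short status, so the net drop in long arrows equals $|A|+|B|$ at $v_i$. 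If one can always find $v_i$ with $|A|+|B|\leq 2$, the inductive hypothesis gives at most $2(m-1)-8=2m-10$ long arrows after the contraction, yielding $\leq 2m-8$ long arrows in the original star.

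The central obstacle is locating such a low-impact vertex. A naive averaging over vertices gives $\sum_i(|A_i|+|B_i|) = L + 2L_2 \leq 3L$, where $L$ is the total number of long arrows and $L_2$ the number of length-$2$ ones---too weak to force an average below $2$ when $L$ is close to $2m$. The plan is to exploit the fan-crossing-free hypothesis structurally. The key observation is this: any arrow $b$ exiting polygon edge $e_l$ cannot cross an arrow sourced at either endpoint $v_l$ or $v_{l+1}$ of $e_l$, since such a crossing, combined with $b$'s crossing of $e_l$, would produce a forbidden fan crossing at the shared vertex. This forces arrows whose sources and exit edges are close together into restricted, non-overlapping regions of the polygon and severely limits the coexistence of type-(A) and type-(B) long arrows around any single vertex. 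A careful case analysis, combined with averaging, should then extract a vertex with $|A|+|B|\leq 2$.

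If this structural induction proves too delicate to execute cleanly, a reasonable backup is a direct application of Euler's formula to the full arrangement of the polygon and all arrows, treating crossing points as additional vertices; the fan-crossing-free hypothesis caps the crossing multiplicities along each edge, and accounting for the polygon's outer face of size $m$ rather than a triangular outer face is what supplies the precise $-8$ additive constant in the final bound.
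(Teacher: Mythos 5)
Your overall strategy (induction on $m$, base case $m=4$ with no long arrows, then shrink the star and invoke the inductive hypothesis) matches the paper's, but the inductive step as you describe it has two genuine gaps, and the first one is the entire content of the lemma.

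First, the existence of a vertex $v_i$ with $|A_i|+|B_i|\leq 2$ is exactly the hard part, and you do not prove it: you note yourself that the naive averaging gives only $L+2L_2\leq 3L$, and then defer to ``a careful case analysis'' that is never carried out. Your key observation (an arrow exiting through $e_l$ cannot cross an arrow rooted at $v_l$ or $v_{l+1}$) is correct and is indeed used in the paper, but it is one ingredient among several, not a proof. The paper sidesteps the need to find a single low-impact vertex altogether: it deletes all short arrows, takes the \emph{shortest remaining (long) arrow} $e$, of length $\ell\geq 2$, and removes the entire chain $v_2,\dots,v_\ell$ of $\ell-1$ vertices lying under $e$ in one step. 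Properties (A)--(G) established for that configuration bound the number of arrows lost by $\ell-1-q$ while the inductive bound improves by $2\ell$, so the slack $\ell\geq 2$ absorbs the losses. A single-vertex contraction has no such slack, and it is not clear it can be made to work.

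Second, even granting such a vertex, your contracted $(m-1)$-star need not be fan-crossing free, so the inductive hypothesis does not apply to it. Merging $e_{i-1}$ and $e_i$ into one edge $g=v_{i-1}v_{i+1}$ can create \emph{new} fan crossings: an arrow that crossed $e_i$ (not incident to $v_{i-1}$) together with an arrow rooted at $v_{i-1}$ was legal before, but after contraction that arrow crosses $g$, which \emph{is} incident to $v_{i-1}$. The paper has to deal with the same issue for its new edge $g=v_1v_{\ell+1}$, and resolves it by showing that every arrow crossing $g$ also crosses $e$ and that no arrow starts at $v_{\ell+1}$ (property (B)), after first deleting the one or two short arrows of the new star. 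Your proposal says nothing about why the contracted object is a valid star. Finally, the Euler-formula backup does not work as stated: fan-crossing-freeness does \emph{not} cap the number of crossings on an arrow by a constant, since an arrow may cross arbitrarily many arrows provided they emanate from pairwise distinct vertices, so the arrangement can have superlinearly many crossing vertices and the count does not close.
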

\begin{proof}
  The proof is by induction over~$m$.
  
  Any arrow in a $4$-star partitions the boundary into chains of
  length one and length two, and so there are no long arrows, proving
  the claim for $m = 4$.
    
  We suppose now that $m > 4$ and that the claim holds for all $4 \leq
  m' < m$.  We \emph{delete all short arrows}, and let $\MeetB$ denote
  the remaining set of arrows, all of which are now long arrows.  Let
  $\edge$ be an arrow of \emph{shortest length}~$\ell$ in~$\face$.
  Without loss of generality, we assume that $\edge$ starts
  in~$\vertex_1$ and exits through edge $e_{\ell+1} = \vertex_{\ell+1}
  \vertex_{\ell+2}$. Then, the following properties hold
  (see~Figure~\ref{fig:kgon}):
  \begin{figure}[t]
    \centerline{%
      \hfill%
      \includegraphics{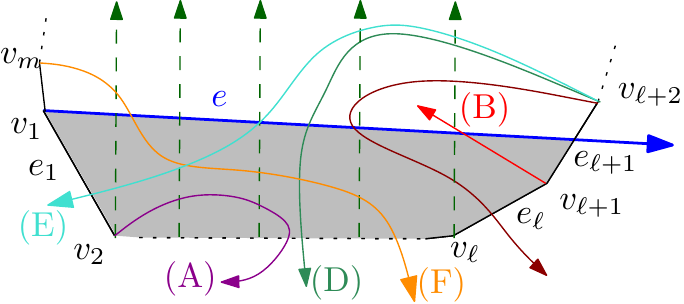}%
      \hfill%
      \includegraphics{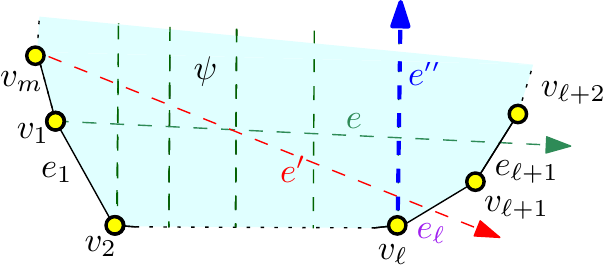}%
      \hfill%
    }
    \caption{Left: properties~(A) to~(F), right: property~(G) for the
      proof of~Lemma~\ref{lemma:kgon}.}
    \label{fig:kgon}
  \end{figure}
    
  \begin{compactenum}[(A)]
  \item Every arrow starting in $\vertex_2,\ldots, \vertex_{\ell+1}$
    must cross $\edge$, as otherwise it would be shorter than $\edge$.
        
  \item There is no arrow that starts in $v_{\ell+1}$. By~(A), such an
    arrow must cross~$\edge$, and so it forms a fan crossing with
    $\edge$ and $\edge_{\ell+1}$.
        
  \item At most one arrow starts in $\vertex_i$, for $i = 2, \ldots,
    \ell$. Indeed, two arrows starting in $\vertex_i$, for $i=2,
    \ldots,\ell$, must cross $\edge$ by~(A), and so they form a fan
    crossing with $\edge$.
        
  \item No arrow starting in~$\vertex_{\ell+2}$ exits through
    $\edge_2, \ldots, \edge_\ell$, as then it would be shorter than
    $\edge$.
        
  \item An arrow starting in~$\vertex_{\ell+2}$ and exiting
    through~$\edge_1$ cannot exist either, as it forms a fan crossing
    with $\edge$ and $\edge_1$.
        
  \item No arrow starting in~$\vertex_m$ crosses $\edge_1, \ldots,
    \edge_{\ell-1}$, as then it would be shorter than $\edge$.
        
  \item The following two arrows cannot both exist: An arrow $\edge'$
    starting in~$\vertex_m$ and exiting through~$\edge_\ell$, and an
    arrow~$\edge''$ starting in~$\vertex_\ell$.  Indeed, if both
    $\edge'$ and $\edge''$ are present, then either~$e''$ exits
    through~$e_{m}$ and forms a fan crossing with $e$ and~$e_{m}$, or
    $e''$ intersects~$e'$ and so $e'$, $e''$, and $e_{\ell}$ form a
    fan crossing (see the right side of~Figure~\ref{fig:kgon}).
  \end{compactenum}
    
  We now create an $(m-\ell+1)$-star $\faceA$ by removing the vertices
  $\vertex_2\dots\vertex_\ell$ with all their incident arrows
  from~$\face$, such that $\vertex_1$ and $\vertex_{\ell+1}$ are
  consecutive on the boundary of~$\faceA$.  An arrow that
  exits~$\face$ through one of the edges $\edge_1\dots\edge_\ell$
  exits~$\faceA$ through the new edge~$g = \vertex_1\vertex_{\ell+1}$.
    
  Let $\MeetB' \subset \MeetB$ be the set of arrows of~$\faceA$, that
  is, the arrows of~$\face$ that do not start
  from~$\vertex_2\dots\vertex_\ell$.  Among the arrows in~$\MeetB'$,
  there are one or two short arrows: the arrow~$\edge$, and the
  arrow~$\edge'$ starting in~$\vertex_m$ and exiting
  through~$\edge_\ell$ in~$\face$ (and therefore through~$g$
  in~$\faceA$) if it exists.  We set $q = 1$ if $\edge'$ exists, and
  else $q=0$.
    
  We delete from~$\faceA$ those one or two short arrows, and claim
  that there is now no fan crossing in~$\faceA$.  Indeed, a fan
  crossing would have to involve the new edge~$g =
  \vertex_1\vertex_{\ell+1}$.  But any arrow that crosses~$g$ must
  also cross~$\edge$, and there is no arrow starting
  in~$\vertex_{\ell+1}$ by~(B).
    
  Since $\ell \geq 2$, we have $m - \ell + 1 < m$, and so by the
  inductive assumption $\faceA$ has at most $2(m - \ell + 1) - 8 = 2m
  - 2\ell -6$ long arrows.  Since there are $1+q$ short arrows
  in~$\MeetB'$, we have $|\MeetB'| \leq 2m - 2\ell - 5 +q $.  By~(C)
  and~(G), we have $|\MeetB| - |\MeetB'| \leq \ell-1 - q$.  It follows
  that
  \begin{align*}
    |\MeetB| & \leq |\MeetB'| + \ell - 1 - q \leq 2m - \ell - 6
    \leq 2m - 8. \qedhere
  \end{align*}
  \aftermathA
\end{proof}

\parpic[l]{\includegraphics{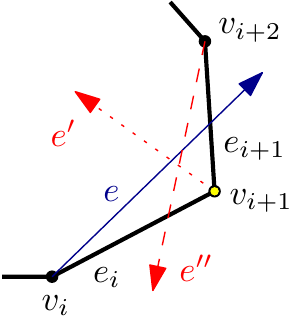}}
It remains to count the short arrows.  Let $e$ be a short arrow, say
starting in~$\vertex_i$ and exiting through~$\edge_{i+1}$. Let us
call~$\vertex_{i+1}$ the \emphi{witness} of~$\edge$.  We observe that
no arrow~$\edge'$ can start in this witness---$\edge'$ would form a
fan crossing with~$\edge$ and~$\edge_{i+1}$.  The
vertex~$\vertex_{i+1}$ can serve as the witness of only one short
arrow: The only other possible short arrow~$\edge''$ with
witness~$\vertex_{i+1}$ starts in~$\vertex_{i+2}$ and exits
through~$\edge_i$.  However, $\edge$, $\edge''$, and~$\edge_i$ form a
fan crossing. \picskip{0}

We can now bound the number of arrows of an~$m$-star.
\begin{lemma}
  \label{lemma:mstar} 
  For $m \geq 3$, an $m$-star~$\face$ has at most $3m - 8$ arrows.
  The bound is attained only for $m = 3$.
\end{lemma}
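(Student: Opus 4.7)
The plan is to split the arrows of $\face$ into long and short arrows and bound each class using the tools already in place. For the base case $m = 3$ the claimed bound $3m-8 = 1$ is exactly Observation~\ref{observation:3star}, and it is attained by any triangle carrying a single arrow, so this case also certifies that the bound can be tight.

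For $m \geq 4$, Lemma~\ref{lemma:kgon} already bounds the number of long arrows by $2m-8$, so the whole task is to control the number $s$ of short arrows. The discussion immediately preceding this lemma associates to each short arrow a well-defined witness vertex, and verifies (i)~that each vertex serves as the witness of at most one short arrow, since the two candidate arrows would themselves form a fan crossing with the boundary edge between them, and (ii)~that no arrow at all may start at a witness vertex, since any such arrow would form a fan crossing with the short arrow and the boundary edge adjacent to its witness. Property~(i) makes the witness map an injection from short arrows into the vertex set of $\face$, giving $s \leq m$. Summing with the long-arrow bound yields at most $(2m-8) + m = 3m-8$ arrows, as desired.

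It remains to show that the bound is strict for every $m \geq 4$. Suppose $\face$ had exactly $3m-8$ arrows; then equality would have to hold in both counting arguments, so in particular $s = m$, meaning every vertex of $\face$ is the witness of some short arrow. But by property~(ii) above, no vertex could then serve as the starting point of an arrow, contradicting the existence of the $m \geq 4$ short arrows we just assumed. Hence the bound is strict for $m \geq 4$, and tightness occurs only at $m = 3$.

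The only non-routine point here, beyond quoting Lemma~\ref{lemma:kgon}, is to make sure that the witness bookkeeping and the long-arrow bookkeeping can legitimately be added: this is automatic because short and long arrows are disjoint classes by definition, with short arrows charged to their witnesses and long arrows handled entirely by Lemma~\ref{lemma:kgon}. I do not anticipate a genuine obstacle; the main care needed is in stating the witness properties cleanly enough that the strictness argument for $m \geq 4$ goes through without further case analysis.
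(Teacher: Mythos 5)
Your proposal is correct and follows essentially the same route as the paper: long arrows are bounded by $2m-8$ via Lemma~\ref{lemma:kgon}, and short arrows are counted through the injective witness map. The paper simply folds your strictness argument into the count directly (if all $m$ vertices were witnesses there would be no arrows at all, so at most $m-1$ short arrows, giving $3m-9$ for $m>3$), which is the same observation you make, packaged as an a priori bound rather than an equality analysis.
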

\begin{proof}
  By Observation~\ref{observation:3star}, the claim is true for $m =
  3$. We consider $m > 3$. By Lemma~\ref{lemma:kgon}, there are at
  most~$2m-8$ long arrows. Each short arrow has a unique witness.  If
  all vertices are witnesses then there is no arrow, and so we can
  assume that at most $m-1$ vertices serve as witnesses, and we have
  at most $m-1$ short arrows, for a total of~$3m-9$ arrows.
\end{proof}

\section{The upper bound for fan-crossing free graphs}
\label{sec:upperbound}

\parpic[l]{\includegraphics{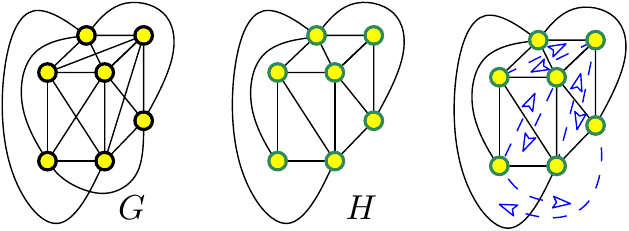}}
\noindent
Let $\Graph=(\Vertices,\Edges)$ be a fan-crossing free graph. We fix
an arbitrary maximal planar subgraph $\GraphA = (\Vertices,\Edges')$
of $\Graph$.  Let~$K = \Edges \setminus \Edges'$ be the set of edges
of~$\Graph$ that is not in~$\GraphA$.  Since $\GraphA$ is maximal,
every edge in~$K$ must cross at least one edge of~$\GraphA$.  We will
replace each edge of~$K$ by \emph{two arrows}.

Let $\edge\in K$ be an edge connecting vertices~$\vertex$
and~$\vertexA$.  The initial segment of~$\edge$ must lie inside a
face~$\face$ of~$\GraphA$ incident to~$\vertex$, the final segment
must lie inside a face~$\faceA$ of~$\GraphA$ incident to~$\vertexA$.
It is possible that $\face = \faceA$, but in that case the
edge~$\edge$ does not entirely lie in the face, as $\GraphA$ is
maximal.  We replace~$\edge$ by two arrows: one arrow starting
in~$\vertex$ and passing through~$\face$ until it exits~$\face$
through some edge; another arrow starting in~$\vertexA$ and passing
through~$\faceA$ until it exits~$\faceA$ through some edge.

In this manner, we replace the set of edges~$K$ by a set of~$2|K|$
arrows.  The result is a planar graph whose faces have been adorned
with arrows. The collection of edges and arrows is fan-crossing free.

Every edge of~$\GraphA$ is incident to two faces of~$\GraphA$, which
can happen to be identical.  If we distinguish the sides of an edge,
the boundary of each face~$\face$ of $\GraphA$ consists of simple
chains of edges.  If $\face$ is bounded, one chain forms the outer
boundary of~$\face$ that makes $\face$~bounded, while all other chains
bound holes inside~$\face$; if $\face$ is unbounded, then all chains
bound holes in~$\face$.

\parpic[l]{\includegraphics{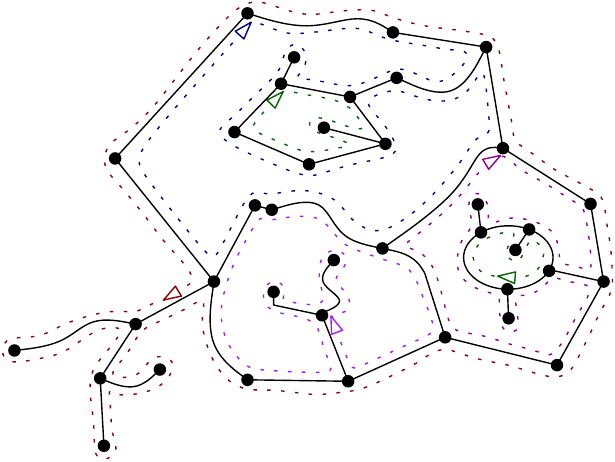}}
If the graph~$\GraphA$ is connected, then the boundary of each face
consists of a single chain.  Let $\face$ be such a face whose boundary
chain consists of $m$~edges (where edges that bound~$\face$ on both
sides are counted twice).  Then $\face$ has at most $3m-8$ arrows.
This follows immediately from Lemma~\ref{lemma:mstar}: Recall that $m$-stars
can be defined purely combinatorially.  Whether three edges form a fan
crossing can be decided solely by the ordering of their endpoints
along the boundary chain.  The boundary of a simply connected face is
a single closed chain, and so Lemma~\ref{lemma:mstar} applies to this setting,
see the side figure.  \picskip{0}

Unfortunately, we cannot guarantee that $\GraphA$ is connected.  The
following lemma bounds the number of arrows of a face~$\face$ in terms
of its complexity and its number of boundary chains.  The complexity
of a face is the total number of edges of all its boundary chains,
where edges that are incident to the face on both sides are counted
twice.
\begin{lemma}
  \label{lemma:face} 
  A face of~$\GraphA$ of complexity~$m$ bounded by~$p$ boundary chains
  possesses at most $3m + 8p - 16$ arrows.  The bound can be attained
  only when $m=3$ and~$p=1$.
\end{lemma}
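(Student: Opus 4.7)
My plan is to prove Lemma~\ref{lemma:face} by induction on the number $p$ of boundary chains of the face $\face$. For the base case $p = 1$, the face is simply connected, its boundary is a single chain of $m$ edges, and the combinatorial setup is exactly that of an $m$-star; Lemma~\ref{lemma:mstar} then bounds the number of arrows by $3m - 8 = 3m + 8\cdot 1 - 16$, with equality only when $m = 3$.

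For the inductive step $p \geq 2$, my plan is to merge two chains $C_1, C_2$ of $\face$ into one by introducing a \emph{bridge}---a new edge drawn inside $\face$ from a vertex $u \in C_1$ to a vertex $v \in C_2$. The bridge is promoted to an edge of $\GraphA$; since it is interior to the original face, the new face $\faceA$ has the bridge on both sides and so inherits a complexity $m' = m + 2$, while the merging of $C_1$ and $C_2$ reduces the chain count to $p - 1$ and the arrow count is unchanged. The inductive hypothesis then gives
\[
  A \leq 3(m+2) + 8(p-1) - 16 = 3m + 8p - 18 < 3m + 8p - 16,
\]
which both proves the bound and shows it cannot be attained for $p \geq 2$; combined with the base case, this establishes the full claim, including the uniqueness of the extremal case $m=3, p=1$.

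The main obstacle is to show that the bridge can always be added while preserving the fan-crossing free property. My plan is to choose the bridge to cross no arrow: an uncrossed bridge cannot be involved in any fan crossing, since a fan crossing requires an item to cross two other items sharing an endpoint, and a bridge crossing nothing can play neither role. It then suffices to exhibit a single cell in the arrangement of arrows inside $\face$ that is adjacent to two distinct chains, and to route the bridge entirely within that cell. A topological argument should always produce such a cell when $p \geq 2$: each arrow of $\face$ has its two endpoints on the boundary of $\face$, and such curves cannot by themselves form a closed barrier separating two boundary components of the (multiply-connected) face. The technical heart of the proof is formalising this claim; if it were to fail for some subtle arrangement, a fallback reduction is to cut along an inter-chain arrow $e$, splitting its target edge, which yields $(m', p', A') = (m+3, p-1, A-1)$ and leaves a shortfall of $2$ from the claimed bound, to be closed by exploiting local structural restrictions on arrows incident to $e$, in the spirit of properties~(A)--(G) of the proof of Lemma~\ref{lemma:kgon}.
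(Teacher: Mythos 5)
There is a genuine gap, and it sits exactly where you locate the ``technical heart'': the claim that the arrows cannot form a closed barrier separating two boundary chains is false. Three pairwise-crossing long arrows can do it. For instance, in a $9$-gon take arrows $v_1\to e_5$, $v_4\to e_8$, $v_7\to e_2$ (drawn as chords). No two of them share a start vertex, and no arrow's exit edge is incident to the start vertex of another, so the configuration is fan-crossing free; yet the three arcs pairwise cross and their union contains a closed curve (the central ``triangle''). If a second boundary chain (a hole) lies inside that triangle, the cell of the arrangement containing the hole is bounded entirely by arrow pieces, no cell is adjacent to both chains, and every bridge between the hole and the outer chain must cross an arrow. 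So the crossing-free bridge you need for the clean $m'=m+2$ step does not always exist. (Even when it does, a cell touching both chains may meet one of them only along the interior of an edge, forcing a new vertex and spoiling the $m+2$ accounting.)

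Your fallback is the paper's actual construction, but it cannot be completed by black-box induction on $p$: cutting along an inter-chain arrow gives $(m+3,\,p-1)$ and removes one arrow, so each step loses $3\cdot 3-8+1=2$ against the target $3m+8p-16$, i.e.\ $2(p-1)$ in total, and ``local restrictions near $e$'' in the spirit of (A)--(G) do not obviously recover a fixed deficit per bridge. The paper closes this gap differently: it performs all $p-1$ bridge constructions \emph{first}, obtaining a single chain of complexity $m'=m+3(p-1)$, and only then counts arrows, applying Lemma~\ref{lemma:kgon} for the long arrows and redoing the short-arrow count with the crucial observation that each new vertex $z$ created on the target edge of a bridge can neither start an arrow nor serve as the witness of a short arrow. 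Hence each bridge adds $3$ to the complexity but only $1$ to the pool of potential witnesses, which is precisely the saving of $2$ per bridge that your induction is missing. This refinement is invisible to the inductive hypothesis as you have stated it, so to rescue an induction on $p$ you would have to strengthen the statement (e.g.\ track which boundary vertices are ``inert''), at which point you have essentially reconstructed the paper's one-shot argument.
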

We will prove the lemma below, but let us first observe how it implies
the upper bound on the number of edges of fan-crossing free graphs.
\begin{lemma}
  \label{lemma:upper} 
  A fan-crossing free graph~$\Graph$ on~$n$ vertices has at most
  $4n-8$ edges.
\end{lemma}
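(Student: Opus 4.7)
The plan is to sum the per-face bound of Lemma~\ref{lemma:face} over all faces of $\GraphA$, combine the result with Euler's formula for (possibly disconnected) plane graphs, and then solve for $|\Edges|$.

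The arrow construction preceding the lemma replaces each edge of $K = \Edges\setminus\Edges'$ by exactly two arrows, one inside a face incident to each endpoint, so the total number of arrows equals $2(|\Edges|-|\Edges'|)$. Inside every face, the arrows and boundary edges together form a fan-crossing free configuration of the kind covered by Lemma~\ref{lemma:face}.

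Let $f$ denote the number of faces and $c$ the number of connected components of $\GraphA$. Two plane-graph identities drive the calculation:
\[
  \sum_\face m(\face) = 2|\Edges'|,
  \qquad
  \sum_\face p(\face) = f + c - 1.
\]
The first is immediate, each edge of $\GraphA$ being counted once on each side. The second I would obtain by isolating each component $C_i$ of $\GraphA$, which by Euler has $f_i = 2 + e_i - n_i$ faces, each contributing exactly one boundary chain to some face of the combined graph; summing over $i$ and applying the global identity $n - |\Edges'| + f = 1 + c$ rewrites $\sum_\face p(\face) = 2c + |\Edges'| - n$ as $f + c - 1$. Combining these with $\sum_\face 1 = f$ and Lemma~\ref{lemma:face} yields
\[
  2(|\Edges|-|\Edges'|) \;\le\; 6|\Edges'| + 8(f + c - 1) - 16 f.
\]

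The finish is algebra: substituting $f = 1 + c + |\Edges'| - n$ cancels both $f$ and $c$ on the right, collapsing the bound to $-2|\Edges'| + 8n - 16$, so rearranging gives $|\Edges| \le 4n - 8$. The only step that really requires thought is the chain-count identity $\sum_\face p(\face) = f + c - 1$; this is precisely the correction that lets the $8p$ slack in Lemma~\ref{lemma:face} cancel the Euler defect caused by $\GraphA$ being disconnected, which is also exactly why Lemma~\ref{lemma:face} had to carry a $p$-dependent term in the first place. Apart from that, all the geometric work has been done by the earlier lemmas and only bookkeeping remains.
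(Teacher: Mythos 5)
Your proof is correct and follows essentially the same route as the paper's: sum the per-face bound of Lemma~\ref{lemma:face} over the faces of $\GraphA$, use the identities $\sum_{\face} m(\face)=2|\Edges'|$ and $\sum_{\face} p(\face)=f+c-1$, and close with Euler's formula for a possibly disconnected plane graph. The only minor differences are that you justify the chain-count identity via per-component Euler formulas rather than by directly counting holes, and you order the algebra slightly differently; the substance is identical.
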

\begin{proof}
  Let $m$ be the number of edges, let $r$ be the number of faces, and
  let $p$ be the number of connected components of~$\GraphA$.  Let
  $\Faces$ be the set of faces of~$\GraphA$.  For a face $\face \in
  \Faces$, let $m(\face)$ denote the complexity of~$\face$, let
  $p(\face)$ denote the number of boundary chains of~$\face$, and
  let~$a(\face)$ denote the number of arrows of~$\face$.
    
  We have $\sum_{\face \in \Faces}m(\face) = 2m$ and
  $\sum_{\face\in\Faces}(p(\face)-1) = p - 1$ (each component is
  counted in its unbounded face, except that we miss one hole in the
  global unbounded face).
    
  The graph~$\Graph$ has $|E| = m + |K|$ edges.  Using
  Lemma~\ref{lemma:face} we have
  \begin{align*}
    2|E| & = 2m + 2|K| =
    \sum_{\face \in \Faces}m(\face) + \sum_{\face \in \Faces} a(\face) \\
    & \leq \sum_{\face \in \Faces} \big(4m(\face) + 8p(\face) - 16\big) \\
    & = 4\sum_{\face\in\Faces}m(\face) + 8\sum_{\face\in\Faces}
    (p(\face) - 1)
    - 8r\\
    & = 8m + 8p - 8 - 8r.  \intertext{By Euler's formula, we have
      $n - m + r = 1 + p$, so $m - r = n - 1 - p$, and we have}
    2|E| & \leq 8(m-r) + 8p - 8 = 8n - 8 - 8p + 8p - 8 = 8n -
    16. \qedhere
  \end{align*}
  \aftermathA
\end{proof}

It remains to fill in the missing proof.

\begin{proof}[Proof of Lemma~\ref{lemma:face}]
  Let $\face$ be a face of~$\GraphA$, and let $m = m(\face)$ and $p =
  p(\face)$ be its complexity and its number of boundary components.
  A boundary component is a chain of edges, and could possibly
  degenerate to a single isolated vertex.
    
  We say that two boundary chains $\xi$ and $\zeta$ are
  \emphi{related} if an arrow starting in a vertex of~$\xi$ ends in an
  edge of~$\zeta$, or vice versa. Consider the undirected graph whose
  nodes are the boundary chains of~$\face$ and whose arcs connect
  boundary chains that are related.  If this graph has more than one
  connected component, we can bound the number of arrows separately
  for each component, and so in the following we can assume that all
  boundary chains are (directly or indirectly) related.
    
  \begin{figure}[ht]
    \centerline{\hfill\includegraphics[page=1]{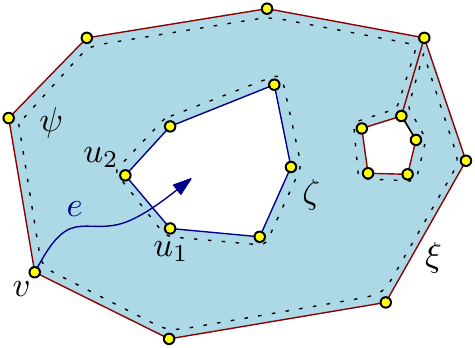}\hfill
      \includegraphics[page=2]{figs/bridge}\hfill}
    \caption{Building a bridge between~$\xi$ and~$\zeta$.}
    \label{fig:bridge}
  \end{figure}
  Consider two related boundary chains~$\xi$ and~$\zeta$.  By
  assumption there must be an arrow~$\edge$, starting at a
  vertex~$\vertex \in\xi$, and ending in an
  edge~$\vertexA_1\vertexA_2$ of~$\zeta$.  We create a new vertex~$z$
  on~$\zeta$ at the intersection point of $\edge$
  and~$\vertexA_1\vertexA_2$, split the boundary edge
  $\vertexA_1\vertexA_2$ into two edges $\vertexA_1z$ and
  $z\vertexA_2$, and insert the two new boundary edges~$\vertex z$
  and~$z \vertex$, see Figure~\ref{fig:bridge}.  This operation has
  increased the complexity of~$\face$ by three.  Note that some arrows
  of $\face$ might be crossing the new boundary edges---these arrows
  will now be shortened, and end on the new boundary edge.
    
  The two boundary chains~$\xi$ and~$\zeta$ have now merged into a
  single boundary chain.  In effect, we have turned an arrow into a
  ``bridge'' connecting two boundary chains.  No fan crossing is
  created, since all edges and arrows already existed. We do create a
  new vertex~$z$, but no arrow starts in~$z$, and so this vertex
  cannot cause a fan crossing.
    
  We insert $p-1$ bridges in total and connect all~$p$ boundary
  chains.  In this manner, we end up with a face~$\faceA$ whose
  boundary is a single chain consisting of $m' = m + 3(p - 1)$ edges.
    
  If $m' = 3$, then $\faceA$ has at most one arrow, by
  Observation~\ref{observation:3star}.  This case happens only for $m
  = 3$ and $p = 1$, and is the only case where the bound is tight.
    
  If $m' > 3$, then we can apply~Lemma~\ref{lemma:kgon} to argue
  that~$\faceA$ has at most~$2m'- 8$ long arrows.  To count the short
  arrows, we observe that the vertex~$z$ created in the
  bridge-building process cannot be the witness of a short arrow: such
  a short arrow would imply a fan crossing in the original
  face~$\face$.  It is also not the starting point of any arrow.
  
  It follows that building a bridge increases the number of possible
  witnesses by only one (the vertex~$\vertex$ now appears twice on the
  boundary chain).  There are thus at most $m + p - 1$ possible
  witnesses in~$\faceA$.  However, if all of these vertices are
  witnesses, then there is no arrow at all, and so there are at most
  $m + p - 2$ short arrows.
    
  Finally, we converted $p - 1$ arrows of~$\face$ into bridges to
  create~$\faceA$.  The total number of arrows of~$\face$ is therefore
  at most
  \begin{align*}
    2m' - 8 + (m+p-2) + (p-1)%
    &=%
    2(m + 3p - 3) - 8 + (m + p - 2) + (p - 1) \\
    & =%
    3m + 8p - 17. \qedhere
  \end{align*}
  \aftermathA
  \aftermathA
\end{proof}

\section{Lower bounds and the straight-line case}
\label{sec:lowerbound}

Consider a fan-crossing free graph~$\Graph$ with $4n-8$~edges.  This
means that the inequality in the proof of Lemma~\ref{lemma:upper} must
be an equality.  In particular, every face~$\face$ of~$\GraphA$ must
have exactly~$3m(\face) + 8p(\face) - 16$ arrows.  By
Lemma~\ref{lemma:face}, this is only possible if $\face$ is a
triangle, and so we have proven
\begin{lemma}
  \label{lemma:maximal} 
  A fan-crossing free graph~$\Graph$ with $4n-8$ edges contains a
  planar triangulation~$\GraphA$ of its vertex set.  Each triangle
  of~$\GraphA$ possesses exactly one arrow.
\end{lemma}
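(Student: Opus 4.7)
The plan is to extract the claim directly from the chain of (in)equalities used to prove Lemma~\ref{lemma:upper}. The hypothesis $|E| = 4n-8$ forces equality in $2|E| \le 8n - 16$, and since Euler's formula contributes only an equality, the only possible source of slack is the application of Lemma~\ref{lemma:face} to each face of $\GraphA$. Thus the first step is to record that, for every face $\face \in \Faces$ of $\GraphA$, one must have
\begin{equation*}
a(\face) = 3\, m(\face) + 8\, p(\face) - 16.
\end{equation*}

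Next, I would invoke the second clause of Lemma~\ref{lemma:face}, which states that this bound is attained \emph{only} when $m(\face) = 3$ and $p(\face) = 1$. Applying this to every face simultaneously yields two consequences at once: every face of $\GraphA$ has complexity~$3$ and a single boundary chain. The single-boundary-chain condition on every face forces $\GraphA$ to be connected (otherwise the outer face would be bounded by more than one chain), so $\GraphA$ is a connected planar graph in which every face, including the unbounded one, is a triangle on the vertex set $\Vertices$, i.e., a planar triangulation of $\Vertices$. Plugging $m(\face) = 3$, $p(\face) = 1$ back into the equality $a(\face) = 3 m(\face) + 8 p(\face) - 16$ gives $a(\face) = 1$, establishing that each triangle carries exactly one arrow.

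Since every step is a direct reading of the two preceding lemmas, I do not anticipate any real obstacle; the only thing to double-check is the argument that $p(\face) = 1$ for every face implies that $\GraphA$ is connected, and that the unbounded face being a triangle is consistent with $\GraphA$ being a triangulation of $\Vertices$ (which it is, because $\GraphA$ was chosen as a maximal planar subgraph on the full vertex set $\Vertices$, so no vertex is missing). This is the entire proof, and it should take only a few lines to write out in final form.
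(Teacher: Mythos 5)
Your proposal is correct and follows exactly the paper's argument: equality in the chain of inequalities from Lemma~\ref{lemma:upper} forces $a(\face) = 3m(\face) + 8p(\face) - 16$ for every face, and the ``attained only when $m=3$, $p=1$'' clause of Lemma~\ref{lemma:face} then yields that $\GraphA$ is a triangulation with one arrow per face. Your extra remarks on connectivity and on $\GraphA$ spanning all of $\Vertices$ are sound and only make explicit what the paper leaves implicit.
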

Note that the arrow must necessarily connect a vertex of the triangle
with the opposite vertex in the triangle adjacent along the opposite
edge, as otherwise, it forms a fan crossing, and so we get the
following second characterization of extremal fan-crossing free
graphs:
\begin{lemma}
  \label{lemma:maximal-quad} 
  A fan-crossing free graph~$\Graph$ with $4n-8$ edges contains a
  planar graph~$Q$ on its vertex set, where each face of~$Q$ is a
  quadrilateral. $\Graph$ is obtained from~$Q$ by adding both
  diagonals for each face of~$Q$.
\end{lemma}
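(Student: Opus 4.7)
The plan is to extract from Lemma~\ref{lemma:maximal} a triangulation $\GraphA$ inside $\Graph$, show that the edges of $K := \Edges \setminus \Edges'$ induce a perfect matching on the triangles of $\GraphA$, and then construct $Q$ by deleting from $\GraphA$ one edge per matched pair. Since $|\Edges'| = 3n-6$ we have $|K| = n-2$, and the triangulation has exactly $2n-4$ triangular faces, which matches the $2|K| = 2n-4$ arrows produced by $K$; combined with one arrow per triangle, this count is exactly what the matching will consume.

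The key structural observation, already noted right after Lemma~\ref{lemma:maximal}, is that each $\edge \in K$ joins the two apex vertices of the two triangles on either side of its crossed edge. Taking $\edge = \vertex\vertexA$, its arrow at $\vertex$ lies in some triangle $T = \vertex b c$, must exit through $bc$ into an adjacent triangle $T' = bc\vertexA$ and terminate at the apex $\vertexA$ (any other destination would force $\edge$ to cross a second edge of $T'$ incident to $b$ or $c$, producing a fan crossing with $bc$). Symmetrically, the arrow from $\vertexA$ lies in $T'$. Thus each $\edge \in K$ determines an unordered pair of distinct adjacent triangles $\{T,T'\}$ sharing exactly one edge of $\GraphA$, and since each triangle hosts exactly one arrow, the correspondence $\edge \mapsto \{T,T'\}$ is a perfect matching on the $2n-4$ triangles into $n-2$ pairs.

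Finally, I would define $Q$ by removing from $\GraphA$ the $n-2$ edges shared by matched pairs, leaving a planar graph with $n$ vertices and $2n-4$ edges. Euler's formula then yields $n-2$ faces, and since each removal merges a matched pair of triangles into a quadrilateral on four distinct vertices, every face of $Q$ is a quadrilateral. The two diagonals of such a face are precisely the deleted edge of $\GraphA$ and the corresponding edge of $K$ crossing it, so $\Graph$ equals $Q$ together with both diagonals of each face, as claimed. The main step that needs care is the matching argument, namely that the two arrows produced by a single edge of $K$ really land in two \emph{distinct} adjacent triangles and that these triangles genuinely share the crossed edge; this is exactly what the apex characterization above delivers, so the rest is a straightforward Euler-formula bookkeeping.
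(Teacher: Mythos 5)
Your proposal is correct and follows essentially the same route as the paper: the paper derives Lemma~\ref{lemma:maximal-quad} from Lemma~\ref{lemma:maximal} via the single observation that each arrow must connect a triangle vertex to the apex of the adjacent triangle across the opposite edge (else a fan crossing arises), which is exactly your apex characterization. Your write-up merely makes explicit the resulting perfect matching on the $2n-4$ triangles and the Euler-formula bookkeeping that the paper leaves implicit.
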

By Euler's formula, a planar graph~$Q$ on~$n$ vertices whose faces are
quadrilaterals has $2n-4$ edges and $n-2$ faces.  Adding both
diagonals to each face of $Q$, we obtain a fan-crossing free
graph~$\Graph$ with $4n-8$ edges.  However, it turns out that this
construction needs to be done carefully: Otherwise, diagonals of two
distinct faces of~$Q$ can connect the same two vertices, and the
result is not a simple graph.  And indeed, it turns out that for $n\in
\{7, 9\}$, no (simple) fan-crossing free graph with $4n-8$ edges
exists!  When $n \geq 8$ is a multiple of four,
Figure~\ref{fig:quadrilaterals} shows planar graphs~$Q$ where every
bounded face is a convex quadrilateral.  Since all their diagonals are
straight, no multiple edges can arise.  Only the two diagonals of the
unbounded face are not straight and need to be checked individually.
We will return to other values of~$n$ below.
\begin{figure}[ht]
  \centerline{\includegraphics[page=1]{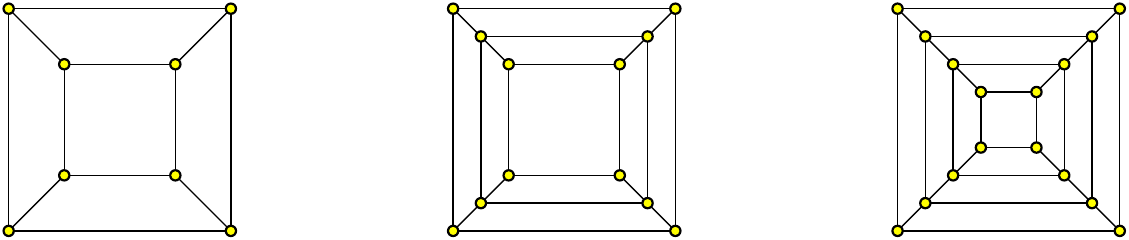}}
  \caption{Planar graphs with convex quadrilateral faces for $n \in
    \{8, 12, 16\ldots\}$.}
  \label{fig:quadrilaterals}
\end{figure}

Lemma~\ref{lemma:maximal} implies immediately that a fan-crossing free
graph with $4n-8$ edges cannot exist if \emph{all} edges are straight:
Since the unbounded face of~$\GraphA$ is a triangle, it cannot contain
a straight arrow, and so any fan-crossing free graph drawn with
straight edges has at most $4n-9$ edges.  This bound is tight: for
any~$n \geq 6$, we can construct a planar graph~$Q$ such that two
faces of~$Q$ are triangles, and all other faces are convex
quadrilaterals.  Euler's formula implies that $Q$ has $2n-3$ edges and
$n-1$ faces, and adding both diagonals to each quadrilateral face
results in a graph with $2n-3 + 2(n-3) = 4n-9$ edges.  The
construction of~$Q$ is shown in Figure~\ref{fig:triangles}.  The upper
row shows the construction when $n \geq 6$ is a multiple of three.  If
$n \equiv 1 \pmod{3}$, we replace the two innermost triangles as shown
in the lower left of the figure. If $n \equiv 2 \pmod{3}$, the two
innermost triangles are replaced as in the lower right figure.
\begin{lemma}
  \label{lemma:straight} 
  A fan-crossing free graph drawn with straight edges has at most
  $4n-9$ edges. This bound is tight for $n\geq6$.
\end{lemma}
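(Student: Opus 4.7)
The proof has two halves. For the upper bound, I would invoke Lemma~\ref{lemma:maximal} directly: an $n$-vertex fan-crossing free graph with $4n-8$ edges must contain a planar triangulation $\GraphA$ in which every triangular face, bounded \emph{and} unbounded, carries exactly one arrow. If every edge is drawn as a straight segment, no straight arrow can fit in the unbounded face. Indeed, such an arrow would start at a vertex $\vertex$ of the outer triangle and exit through one of the triangle's edges into the unbounded region; but the two triangle edges incident to $\vertex$ leave $\vertex$ on the bounded side, while the opposite edge lies entirely in the bounded region, so no straight ray from $\vertex$ into the unbounded face crosses any edge of the outer triangle. Hence $4n-8$ is unattainable by a straight-line drawing, giving the bound $4n-9$.

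For tightness, I would construct, for every $n \geq 6$, a planar graph $Q$ whose faces consist of exactly two triangles and $n-3$ convex quadrilaterals. Euler's formula forces $Q$ to have $2n-3$ edges and $n-1$ faces, and adding both diagonals to each quadrilateral face yields a straight-line graph $\Graph$ with $2n-3 + 2(n-3) = 4n-9$ edges. Because each quadrilateral face is convex, both of its diagonals are straight segments that cross exactly once in the interior of the face, and each diagonal is crossed only by the other diagonal of the same face. In particular no edge is crossed by two edges sharing a vertex, so $\Graph$ is fan-crossing free.

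The main work is the construction of $Q$, which splits by $n \bmod 3$. For $n$ a multiple of three, I would use the nested, rotated triangles of the upper row of Figure~\ref{fig:triangles}: consecutive layers form convex quadrilateral rings, and the only non-quadrilateral faces are the innermost triangle and the unbounded face. For $n \equiv 1 \pmod 3$ and $n \equiv 2 \pmod 3$, I would replace the two innermost triangular layers by the gadgets in the lower part of the figure, designed to absorb one or two extra vertices while preserving the global ``two triangles, all other faces convex quadrilaterals'' structure.

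The main obstacle, and the only subtlety one actually has to verify in each of the three cases, is that the added diagonals keep $\Graph$ a \emph{simple} topological graph: no two distinct quadrilateral faces of $Q$ may produce diagonals with the same pair of endpoints, and no diagonal may coincide with an existing edge of $Q$. In the constructions above this is ensured by the geometric separation between successive rings, so each diagonal is contained strictly inside its own face; a face-by-face check for the small innermost gadgets completes the argument.
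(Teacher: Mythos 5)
Your proposal is correct and follows essentially the same route as the paper: the upper bound via Lemma~\ref{lemma:maximal} and the observation that the triangular unbounded face of the triangulation cannot contain a straight arrow, and tightness via a planar graph $Q$ with two triangular faces and $n-3$ convex quadrilateral faces, each quadrilateral receiving both diagonals. Your added checks (simplicity of the resulting graph and that each diagonal is crossed only by the other diagonal of its face) make explicit what the paper leaves to the figures, but the approach is the same.
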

\begin{figure}[ht]
  \centerline{\includegraphics[page=2]{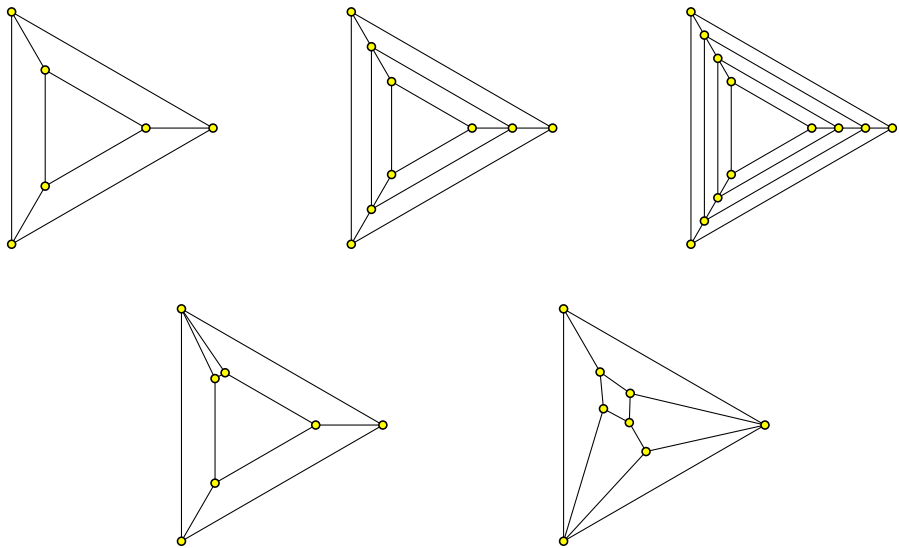}}
  \caption{Adding both diagonals for each quadrilateral face results
    in a straight-line fan-crossing free graph with $4n-9$ edges.}
  \label{fig:triangles}
\end{figure}

We observe that the extremal fan-crossing free graph constructed
for~$n=6$ has $15$~edges and is therefore the complete graph~$K_6$. It
follows that the complete graph~$K_n$ is fan-crossing free for~$n \leq
6$, and it remains to discuss extremal fan-crossing free graphs for $n
\geq 7$ when the edges are not necessarily straight.
\begin{lemma}
  Extremal fan-crossing free graphs with $4n-8$ edges exist for $n =
  8$ and all $n \geq 10$.  For $n \in \{7, 9\}$, extremal fan-crossing
  free graphs have $4n-9$ edges.
\end{lemma}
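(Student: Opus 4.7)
The plan is to invoke Lemma~\ref{lemma:maximal-quad}, which identifies an extremal fan-crossing free graph on $n$ vertices with a planar quadrangulation $Q$ whose two per-face diagonals, when added to $Q$, yield a simple graph. The problem thus splits into ruling out such a $Q$ for $n\in\{7,9\}$ and constructing one for $n=8$ and every $n\geq 10$.

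For the impossibility at $n\in\{7,9\}$, I would first observe that a simple planar graph all of whose faces are simple $4$-cycles must be $2$-connected (otherwise a face through a cut vertex could not be a simple cycle) and hence bipartite (all face cycles are of even length, so all cycles are). Let $(A,B)$ be the bipartition of $Q$. Each face alternates between $A$ and $B$, so one of its two diagonals is a chord inside $A$ and the other inside $B$. Summing over the $n-2$ faces, $Q$ contributes $n-2$ pairwise distinct chords within $A$ and $n-2$ within $B$, forcing
\[
  \binom{|A|}{2}\geq n-2\quad\text{and}\quad\binom{|B|}{2}\geq n-2.
\]
For $n=7$ this requires $|A|,|B|\geq 4$, impossible with $|A|+|B|=7$; for $n=9$ it requires $|A|,|B|\geq 5$, again impossible. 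Hence no extremal graph with $4n-8$ edges exists in those two cases, and since Lemma~\ref{lemma:straight} exhibits a fan-crossing free graph with $4n-9$ edges for every $n\geq 6$, the extremal value equals $4n-9$ for $n\in\{7,9\}$.

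For the constructive side, $n=8$ and every $n\equiv 0\pmod 4$ are supplied directly by Figure~\ref{fig:quadrilaterals}. For the remaining residues I would exhibit three explicit base quadrangulations on $n=10$, $n=11$, and $n=13$ vertices whose per-face diagonals are all distinct, and then extend each to every larger $n$ of the same residue modulo $4$ by the following inflation move: pick any quadrilateral face $abcd$ of $Q$, introduce four fresh vertices $a',b',c',d'$ forming an inner quadrilateral $a'b'c'd'$, and add the four spokes $aa',bb',cc',dd'$. This replaces $abcd$ by five quadrilateral faces, adds four vertices and eight edges, and preserves both planarity and the bipartite structure. Each of the ten diagonals produced in the five new faces has one of $a',b',c',d'$ as an endpoint and therefore cannot coincide with any surviving diagonal of the old $Q$; a short enumeration of the pairs $\{a',c'\},\{b',d'\},\{a,b'\},\{a',b\},\{b,c'\},\{b',c\},\{c,d'\},\{c',d\},\{d,a'\},\{d',a\}$ confirms they are pairwise distinct. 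Iterating the inflation from the four base cases $n\in\{8,10,11,13\}$ then covers every $n\geq 10$.

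The main obstacle is the explicit verification of the three small base constructions for $n\in\{10,11,13\}$. The bound $\binom{|A|}{2}\geq n-2$ already forces the bipartition to be as balanced as possible (tightly so at $n=10$, where only the $(5,5)$ split is available), which sharply narrows the search space; the actual check is finite and reduces to drawing a concrete planar bipartite quadrangulation and confirming that its $2(n-2)$ diagonal pairs are pairwise distinct.
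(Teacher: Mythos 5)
Your proposal is correct in outline but takes a genuinely different route from the paper in both halves. For the impossibility at $n\in\{7,9\}$, the paper also starts from Lemma~\ref{lemma:maximal-quad} but then exhibits a degree-two vertex of~$Q$ (for $n=7$ via the average degree $20/7<3$, for $n=9$ via a degree-sum computation in the bipartition showing $6k=20$), whose two incident quadrilaterals would share a diagonal; your argument instead counts the $n-2$ diagonals that must lie inside each bipartition class and derives $\binom{|A|}{2}\geq n-2$ and $\binom{|B|}{2}\geq n-2$, which is arithmetically impossible for $n=7$ (forcing $|A|,|B|\geq 4$) and $n=9$ (forcing $|A|,|B|\geq 5$). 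Both are valid; yours is more uniform and avoids the separate parity argument, at the modest cost of having to justify that $Q$ is bipartite (which the paper also asserts, and which follows since the quadrilateral face boundaries generate the cycle space). For the constructions, the paper handles all $n\geq 10$ at once by drawing $Q$ on the sphere with spherically convex quadrilateral faces, so that each diagonal is a geodesic confined to its own face and distinctness is automatic; you instead propose a purely combinatorial inflation move from base cases $n\in\{8,10,11,13\}$. The inflation itself is sound: it preserves planarity, bipartiteness and the quadrangulation property, and all ten diagonals of the five new faces involve a fresh vertex, so no multi-edge can arise. What your write-up leaves open is the explicit exhibition of the base quadrangulations for $n=10,11,13$ with pairwise distinct diagonals; these do exist (the paper's pole-plus-zigzag quadrangulations, specialized to those three values, serve as such bases), so the plan goes through, but as written those three finite verifications are deferred rather than done.
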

\begin{proof}
  Let $n \in \{7, 9\}$, and assume $\Graph$ is a fan-crossing free
  graph on~$n$ vertices with~$4n-8$ edges.  By
  Lemma~\ref{lemma:maximal-quad}, $\Graph$ contains a planar graph~$Q$
  whose faces are quadrilaterals.  We claim that $Q$ has a vertex~$v$
  of degree two.  This implies that the two quadrilaterals incident
  to~$v$ have diagonals connecting the same two vertices, a
  contradiction.  It follows that a fan-crossing free graph with
  $4n-8$ edges does not exist.  By Lemma~\ref{lemma:straight}, a
  fan-crossing free graph with $4n-9$ edges does exist.
    
  For $n = 7$, the claim is immediate: $Q$ has $2n-4=10$ edges, and so
  its average degree is $20/7 < 3$, and a vertex of degree two must
  exist.
    
  For $n = 9$, the total degree of~$Q$ is $2(2n-4) = 28$.  We assume
  there is no vertex of degree two. Since nine vertices of degree
  three already contribute~$27$ to the total degree, it follows that
  there is one vertex~$w$ of degree four, while the other eight
  vertices have degree three.  Since $Q$ contains only even cycles, it
  is a bipartite graph, and the vertices can be partitioned into two
  classes~$A$ and~$B$.  Let $w \in A$, and let $k = |A| - 1$.  The
  total degree of vertices in~$A$ and in~$B$ is identical, so we have
  $4 + 3k = 3(8-k)$, or $6k=20$, a contradiction.
    
  The case $n = 8$ was already handled in
  Figure~\ref{fig:quadrilaterals}, so consider $n \geq 10$.  We will
  again start with a planar graph~$Q$ with quadrilateral faces.  To
  avoid diagonals connecting identical vertices, we would like to make
  \emph{all} faces convex.  This is obviously not possible when
  drawing the graph in the plane, and so we will draw the graph on the
  sphere, such that all faces are spherically-convex quadrilaterals.
    
  For even~$n$, we place a vertex at the North pole and at the South
  pole each.  The remaining $n-2$ vertices form a zig-zag chain near
  the equator, distributing the points equally on two circles of equal
  latitude, see left-hand side of Figure~\ref{fig:spherical}
  (for~$n=10$).  For odd~$n$, we place two vertices close to the North
  pole, one vertex at the South pole, and let the remaining~$n-3$
  vertices again form a zig-zag chain near the equator, see the
  right-hand side of Figure~\ref{fig:spherical} (for $n = 11$).  One
  of the resulting quadrilaterals is long and skinny, but it is still
  spherically convex.
  \begin{figure}[ht]
    \centerline{\includegraphics{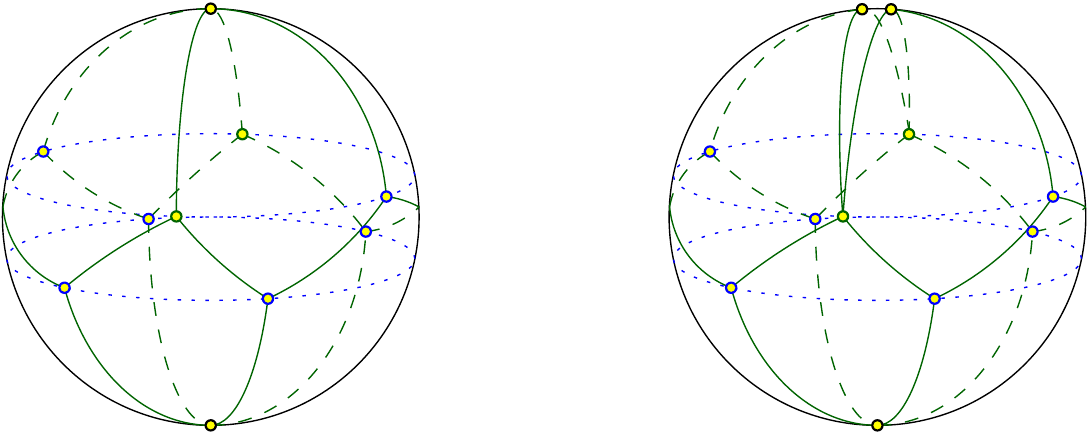}}
    \caption{Spherically-convex quadrilateralization for even~$n$
      and odd~$n$.}
    \label{fig:spherical}
  \end{figure}
\end{proof}

\section[k-fan-crossing free graphs]{$k$-fan-crossing free
  graphs for $k \geq 3$}
\label{sec:k-geq-3}

Our proof of Theorem~\ref{theo:k-main} has the same structure as for
the case~$k = 2$: Let $G = (V, E)$ be a $k$-fan-crossing free graph
for~$k \geq 3$.  We fix an arbitrary maximal planar subgraph~$H = (V,
E')$ of~$G$, and let $K = E \setminus E'$.  We replace each edge
in~$K$ by two arrows, so we end up with a planar graph~$H$ whose faces
have been adorned with $2|K|$~arrows in total.
\begin{lemma}
  \label{lemma:k-face} A $k$-fan-crossing free face of~$H$ of
  complexity~$m \geq 3$ bounded by~$p$ boundary chains possesses at
  most $3(k-1)(m + 2p - 4) - 2m + 3$ arrows, for $k \geq 3$.
\end{lemma}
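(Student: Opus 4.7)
The plan is to mirror the proof of Lemma~\ref{lemma:face} with two main ingredients: the same bridge construction used there to reduce the face to a single boundary chain, followed by a single-chain bound proved by induction analogous to Lemma~\ref{lemma:kgon}. For the first ingredient, apply the bridge construction: for each pair of related boundary chains of $\face$ pick a connecting arrow, split the target boundary edge at the crossing point, and insert two new boundary edges. Each bridge increases the complexity by three and converts one arrow into a bridge, and since no arrow starts at a newly created vertex no $k$-fan crossing is introduced. After $p-1$ bridges $\face$ becomes a single-chain face $\faceA$ of complexity $m' = m + 3(p-1)$, at the cost of $p-1$ consumed arrows.

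The heart of the proof is the single-chain bound: an $m'$-star that is $k$-fan-crossing free has at most $3(k-1)(m'-2) - 2m' + 4$ arrows (exactly the quantity that, combined with the bridge step, yields $3(k-1)(m+2p-4) - 2m + 3$). I would prove this by induction on $m'$, generalizing Lemma~\ref{lemma:kgon}. Take a shortest arrow $\edge$ starting at $\vertex_1$ and exiting through $\edge_{\ell+1}$. Because a $k$-fan needs $k$ edges sharing a vertex to be crossed by a single edge, the restrictions (A)--(G) of the $k=2$ proof generalize as follows: at most $k-1$ arrows may start at each of $\vertex_2,\dots,\vertex_\ell$ (all of them must cross $\edge$ by minimality of $\ell$, and $k$ such arrows would form a $k$-fan with $\edge$); at most $k-2$ arrows from $\vertex_{\ell+1}$ cross $\edge$, since $\edge$ also meets the boundary edge $\edge_{\ell+1}$ near $\vertex_{\ell+1}$ and any further arrow would complete a $k$-fan; and symmetric bounds apply at $\vertex_{\ell+2}$ and $\vertex_m$. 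We then delete $\vertex_2,\dots,\vertex_\ell$ together with their arrows, strip a constant number of short arrows that would otherwise become fan crossings through the new edge $\vertex_1 \vertex_{\ell+1}$, and apply the induction hypothesis to the resulting $(m'-\ell+1)$-star.

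The main obstacle, I expect, is closing the induction with exactly the coefficient $3(k-1)$. In the $k=2$ proof each of the vertices near $\edge$ carries at most one arrow, which makes the bookkeeping in Lemma~\ref{lemma:kgon} very tight. For $k \geq 3$ each such vertex may carry up to $k-1$ arrows, so removing $\ell-1$ vertices costs up to $(\ell-1)(k-1)$ arrows that have to be absorbed by the gain of $\ell-1$ in complexity; the arithmetic only closes because the single-chain bound has coefficient $3(k-1)$ multiplied by $m'$ together with a $-2m'$ slack term. Choosing the right notion of ``short arrow'' in the $k$-fan setting, and in particular handling carefully the contribution of $\vertex_{\ell+1}$ (which is doubly constrained by $\edge$ and by the boundary edge $\edge_{\ell+1}$), is where the proof has to be executed delicately. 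The assembly of these pieces, together with substituting $m' = m + 3(p-1)$ and adding back the $p-1$ bridge arrows, gives the claimed bound.
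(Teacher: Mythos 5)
Your high-level plan (bridge the boundary chains together, then prove a single-chain star bound by induction on a shortest long arrow) is the same skeleton the paper uses, but the arithmetic you propose does not close, and the place where it breaks is exactly the point the paper has to work hardest. Suppose the single-chain bound is a uniform function of the complexity $m'$ with leading behaviour $(3k-5)m'$ (your stated bound $3(k-1)(m'-2)-2m'+4$ has slope $3k-5$, and one cannot do better uniformly, since a heavy vertex can genuinely carry about $3k-5$ arrows). Each bridge increases the complexity by $3$, so under such a bound it adds about $3(3k-5)=9k-15$ to the arrow count, plus the consumed arrow(s). But the target $3(k-1)(m+2p-4)-2m+3$ only budgets $6(k-1)=6k-6$ per additional boundary chain, and $9k-15>6k-6$ for every $k\geq 3$. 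Concretely, substituting $m'=m+3(p-1)$ into your bound and adding back $p-1$ arrows gives $3(k-1)(m-2)-2m+4+(p-1)(9k-14)$, which exceeds the target by $(p-1)(3k-8)+1>0$; so the claim that your single-chain bound ``combined with the bridge step yields'' the stated formula is false (it is already off by one at $p=1$, and off by $\Theta(kp)$ for larger $p$).

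The paper's resolution is a refined star lemma (Lemma~\ref{lemma:k-mstar}) in which vertices are classified as heavy, light, or void, with bound $(3k-5)h+k\lambda+(2k-3)\nu-(6k-9)$: the three units of complexity created by a bridge consist of one heavy vertex (the repeated start vertex $v$) and two \emph{light} vertices, which contribute only $(3k-5)+2k=5k-5$, and together with the at most $k-1$ arrows deleted per bridge this comes to exactly $6k-6$ per chain. Making the two new vertices provably light is itself nontrivial: the paper routes the \emph{two extreme} arrows from $v$ to the target edge $u_1u_2$ as the bridge (creating $z_1,z_2$), rather than a single arrow as in the $k=2$ proof of Lemma~\ref{lemma:face}, precisely so that no remaining arrow passes over $z_1$ or $z_2$ in the wrong direction. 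Your proposal contains neither the vertex classification nor the two-arrow bridge, and without them the induction cannot be closed with the coefficient $3(k-1)$. (Your sub-bounds in the induction, e.g.\ at most $k-2$ arrows from $v_{\ell+1}$ crossing $\edge$, are individually sound, but the paper's actual induction removes one vertex at a time and tracks the loss $\Delta$ through a lengthy case analysis on how the new edge $v_1v_3$ can create a $k$-fan; the coarse ``delete $v_2,\dots,v_\ell$ and lose at most $(\ell-1)(k-1)$ arrows'' accounting is not by itself enough to recover the additive constant.)
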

As in the case~$k=2$, we prove Lemma~\ref{lemma:k-face} by converting arrows
connecting different boundary components of the face into bridges,
until we have obtained a single boundary chain.  The technical details
are more complicated, and so we first discuss how Lemma~\ref{lemma:k-face}
implies Theorem~\ref{theo:k-main}.

\begin{proof}[Proof of Theorem~\ref{theo:k-main}]
  Let $m$ be the number of edges, let $r$ be the number of faces, and
  let $p$ be the number of connected components of~$\GraphA$.  Let
  $\Faces$ be the set of faces of~$\GraphA$.  For a face $\face \in
  \Faces$, let $m(\face)$ denote the complexity of~$\face$, let
  $p(\face)$ denote the number of boundary chains of~$\face$, and
  let~$a(\face)$ denote the number of arrows of~$\face$. As observed
  before, we have $\sum_{\face \in \Faces}m(\face) = 2m$ and
  $\sum_{\face\in\Faces}(p(\face)-1) = p - 1$.
    
  The graph~$\Graph$ has $|E| = m + |K|$ edges.  Since $m(\face) \geq
  3$, Lemma~\ref{lemma:k-face} implies $m(\face) + a(\face) \leq
  3(k-1)(m(\face) + 2p(\face)-4)$, and so
  \begin{align*}
    2|E| & = 2m + 2|K| = \sum_{\face \in \Faces}
    m(\face) + \sum_{\face \in \Faces} a(\face) \\\
    & \leq \sum_{\face \in \Faces} \Big((3k-3)\big(m(\face) +
    2(p(\face) - 1) - 2\big)\Big) \\
    & = 3(k-1)\Big(\sum_{\face\in\Faces}m(\face) +
    2\sum_{\face\in\Faces}
    (p(\face) - 1)- 2r\Big)\\
    & = 6(k-1)(m + (p-1) - r) = 6(k-1)(n-2).
  \end{align*}
  In the last line we used Euler's formula $n - m + r = 1 + p$
  for~$H$.
\end{proof}

To prove Lemma~\ref{lemma:k-face}, we need a $k$-fan equivalent of
Lemma~\ref{lemma:mstar}.  One additional difficulty is now keeping the
contribution of the additional vertices formed by creating bridges
low.  We need a few new definitions.

Again, an \emphi{$m$-star} is a regular $m$-gon~$\face$ with a set of
arrows.  We require the set of edges and arrows to be $k$-fan-crossing
free.  We number the vertices counter-clockwise as~$v_1, v_2, \dots,
v_m$, and denote the edge between~$v_i$ and~$v_{i+1}$ as~$e_i$ (where
index arithmetic is again modulo~$m$).  We let $a_{ij}$ denote the
number of arrows starting in vertex~$v_i$ and exiting through
edge~$e_j$.  The \emphi{degree}~$a_i$ of vertex~$v_i$ is the total
number of arrows starting in~$v_i$.  We define the length of an arrow
and the notions of short and long arrows as before.

We need to distinguish different kinds of vertices. If vertex~$v_i$
has degree~$a_i = 0$, we call it \emphi{void}.  If, in addition, its
left neighbor~$v_{i-1}$ has no short arrow passing over~$v_i$, that
is, if $a_i = 0$ and $a_{i-1,i} = 0$, then we call it
\emphi{left-light}.  Similarly, if $a_i = 0$ and $a_{i+1,i-1} = 0$,
then we call~$v_i$ \emphi{right-light}.  A vertex is \emphi{light}
when it is left-light or right-light.  A vertex is \emphi{heavy} when
its degree is non-zero.  Finally, we do not allow a left-light vertex
to be adjacent to a right-light vertex.  What this means is that if a
sequence of consecutive vertices, say~$v_1,v_2,\dots,v_t$ is a maximal
sequence of zero-degree vertices, then either they are all left-light
(if $a_{m,1} = 0$), or they are all right-light (if~$a_{t+1,t-1} =
0$), or only~$t-1$ of them are light and the last one is counted as
void.

We can now formulate our lemma.
\begin{lemma}
  \label{lemma:k-mstar}
  For $m \geq 3$, $k \geq 3$, and $h\geq 2$, an $m$-star~$\face$ with
  $h$~heavy vertices, $\lambda$~light vertices, and~$\nu$~void (but
  not light) vertices has at most $(3k-5)h + k\lambda + (2k-3)\nu -
  (6k-9)$ arrows.
\end{lemma}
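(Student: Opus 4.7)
The plan is to extend the arguments of Lemma~\ref{lemma:kgon} and Lemma~\ref{lemma:mstar} to the $k$-fan setting via induction on~$m$. The base case $m=3$ is verified directly: since $h\geq 2$, only a few distributions of vertex types on a triangle are possible, and each vertex---already incident to two boundary edges---can host at most $k-1$ arrows before the $k$-fan crossing condition is violated, so direct enumeration matches the stated bound in every case.

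For the inductive step with $m \geq 4$, I would pick an arrow $\edge$ of shortest length $\ell$, starting without loss of generality in $\vertex_1$ and exiting through $\edge_{\ell+1}$, and establish $k$-fan analogues of the conditions (A)--(G) of Lemma~\ref{lemma:kgon}. Every arrow starting at $\vertex_2,\ldots,\vertex_{\ell+1}$ must cross $\edge$ (else it would be shorter); since $\vertex_{\ell+1}$ is incident to the boundary edge $\edge_{\ell+1}$ that $\edge$ meets, $\vertex_{\ell+1}$ can host at most $k-2$ such arrows before they join $\edge_{\ell+1}$ to form a $k$-fan through $\edge$; each of $\vertex_2,\ldots,\vertex_\ell$ can host at most $k-1$ arrows, since a $k$-th one would form a $k$-fan with $\edge$ at endpoint $\vertex_i$; and symmetric arguments constrain arrows from $\vertex_m$ and $\vertex_{\ell+2}$ via the edges $\edge_m$ and $\edge_{\ell+1}$. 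I would then delete $\vertex_2,\ldots,\vertex_\ell$ together with the incident arrows and any short arrows that would now cross the new boundary edge $g=\vertex_1\vertex_{\ell+1}$, producing a $k$-fan-crossing free $(m-\ell+1)$-star $\faceA$ with vertex-type counts $h',\lambda',\nu'$. Applying induction to $\faceA$ bounds the surviving arrows, while the constraints above bound the deleted ones; the goal is to check that $(3k-5)(h-h') + k(\lambda-\lambda') + (2k-3)(\nu-\nu')$ upper bounds the number of deleted arrows, which closes the induction.

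The main obstacle is the bookkeeping of heavy, light, and void vertices. Removing $\vertex_2,\ldots,\vertex_\ell$ may change the types of the surviving neighbors $\vertex_1,\vertex_m,\vertex_{\ell+1},\vertex_{\ell+2}$ in $\faceA$, and could drop $h'$ below $2$, in which case the inductive hypothesis does not apply. To handle this cleanly, I would first dispose of two preliminary cases in the same induction: if a light vertex exists it can be removed by merging its adjacent boundary edges, decreasing the bound by $k$ (matching $\lambda$'s coefficient) and no arrows; if a void-but-not-light vertex exists it can be removed at a cost of $2k-3$, which suffices to absorb the short arrows witnessed by the two short arrows that define its non-lightness. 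This reduces the shortest-arrow argument to the all-heavy case, where the vertex-type accounting simplifies and $h' \geq 2$ can be maintained (or the degenerate small-$m$ boundary can be checked by hand). The approach mirrors Lemma~\ref{lemma:mstar}'s separation of long arrows (handled via Lemma~\ref{lemma:kgon}) from short arrows (handled via witnesses), now merged into a single induction that accommodates the richer vertex-type structure.
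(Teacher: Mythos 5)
Your plan transplants the chain-deletion of Lemma~\ref{lemma:kgon} to the $k$-fan setting, but its central counting claim is false for $k\geq 3$: it is not true that each of $v_2,\ldots,v_\ell$ hosts at most $k-1$ arrows. Only the arrows that \emph{cross} $\edge$ are limited to $k-1$ per vertex (a $k$-th would form a $k$-fan with $\edge$); short arrows from an interior chain vertex $v_i$ exit through $\edge_{i+1}$ or $\edge_{i-2}$ and need not cross $\edge$ at all, so $v_i$ can carry up to roughly $3(k-1)=3k-3$ arrows. Since the target bound budgets only $3k-5$ per heavy vertex, deleting the whole chain and charging $k-1$ per vertex does not close the induction. (Relatedly, the witness argument you invoke for short arrows collapses for $k\geq 3$: a vertex passed over by a short arrow may still emit up to $k-2$ arrows.) This is precisely why the paper's proof abandons chain deletion: it removes a \emph{single} vertex $v_2$, and bounds the loss $\Delta=a_{2\edge}+a_{23}+a_{12}+a_{31}+t_x+t_y$ by $2k-2$ or $3k-5$ through a case analysis on how the arrows over $v_2$ interact, including one or two \emph{extra} arrow deletions ($t_x,t_y$) needed to destroy new $k$-fans created at the shortcut edge $g=v_1v_3$.

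Your preliminary reductions are also unsound. Removing a degree-zero (light or void) vertex $v_i$ and merging $\edge_{i-1},\edge_i$ into $g=v_{i-1}v_{i+1}$ can create a new $k$-fan: an arrow that exited through $\edge_i$ now crosses $g$, which is incident to $v_{i-1}$, and may complete a fan with $k-1$ arrows of $v_{i-1}$ that it already crossed; left-lightness only says no arrow of $v_{i-1}$ exits through $\edge_i$ and does not exclude this. So the inductive hypothesis cannot be applied to the merged star without deleting further arrows, contradicting your ``no arrows'' accounting; the paper instead keeps light and void vertices and tracks how the types of $v_1$ and $v_3$ change when $v_2$ is removed. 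Finally, the base cases you propose are too thin: besides $m=3$ one needs $m=4$, the case with no long arrows (handled in the paper by deleting a vertex of degree $\geq k-1$), the case where all degrees are at most $k-2$, and the case $h=2$ with adjacent heavy vertices, since otherwise the reduced star may have fewer than two heavy vertices and the hypothesis does not apply.
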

Before we prove Lemma~\ref{lemma:k-mstar}, let us first see how it
implies Lemma~\ref{lemma:k-face}.
\begin{proof}[Proof of Lemma~\ref{lemma:k-face}]
  We consider a face~$\face$ of~$\GraphA$ of complexity~$m$ bounded by
  $p$~boundary components.  As in the proof of Lemma~\ref{lemma:face},
  we can assume that all boundary components are related.  Thus there
  must be boundary components~$\xi, \zeta$ and an arrow starting at a
  vertex~$v \in \xi$ and ending in an edge~$u_1u_2$ of~$\zeta$.  More
  precisely, there could be at most~$k-1$ such arrows starting in~$v$
  and ending in~$u_1u_2$.  We pick the two extreme arrows, that is,
  the one closest to~$u_1$ and the one closest to~$u_2$, and convert
  them to bridges, creating two new vertices~$z_1, z_2$ on $u_1u_2$,
  see Figure~\ref{fig:bridge_radial}.
  \begin{figure}[ht]
    \centerline{\hfill\includegraphics[page=1]{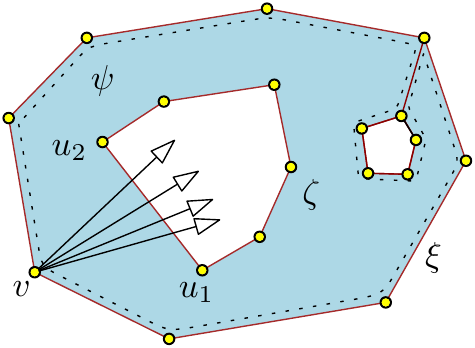}\hfill\
      \includegraphics[page=2]{figs/bridge_radial}\hfill}
    \caption{Building a bridge between~$\xi$ and~$\zeta$.}
    \label{fig:bridge_radial}
  \end{figure}
    
  Inserting these two bridges increases the complexity of the face by
  three.  However, $z_1$ and~$z_2$ are by construction light vertices
  (in Figure~\ref{fig:bridge_radial}, $z_1$~is right-light, $z_2$~is
  left-light). There are at most $k-1$ arrows from~$v$ to~$u_1u_2$
  that are deleted from the face.
    
  We continue in this manner until the face is bounded by a single
  boundary chain.  Every bridge adds one heavy vertex and two light
  vertices to the complexity of the face. (Note that later bridges
  could possibly create vertices on bridges built earlier, but that
  doesn't stop the vertices from being light.)  We finally obtain a
  face with $m + p-1 \geq 2$ heavy vertices and $2(p-1)$ light
  vertices, and during the process we deleted at most $(k-1)(p-1)$
  arrows.  Applying Lemma~\ref{lemma:k-mstar}, the total number of
  arrows of~$\face$ is at most
  \begin{align*}
    & \hspace{-1cm}%
    (3k-5)(m + p - 1) + 2k(p-1) - (6k-9) + (k-1)(p-1) \\
    & = (3k-5)m + (3k - 5 + 2k + k - 1)(p-1) - (6k-9) \\
    & = (3k-3)m - 2m + (6k-6)(p-1) - (6k-6) + 3 \\
    & = 3(k-1)(m + 2p - 4) - 2m + 3. \qedhere
  \end{align*}
  \aftermathA
\end{proof}

Finally, it remains to prove the bound on the number of arrows of a
$k$-fan-crossing free $m$-star.
\begin{proof}[Proof of Lemma~\ref{lemma:k-mstar}]
  Let $A(h, \lambda, \nu)$ denote the maximum number of arrows of an
  $m$-star with $h$~heavy vertices, $\lambda$~light vertices, and
  $\nu$~void but not light vertices, where $m = h + \lambda + \nu$.
  We set
  \[
  B(h, \lambda, \nu) = (3k-5)h + k\lambda + (2k-3)\nu - (6k-9),
  \]
  and show by induction over~$m$ that $A(h,\lambda, \nu) \leq B(h,
  \lambda, \nu)$ under the assumption that $m \geq 3$, $k \geq 3$, $h
  \geq 2$.
    
  We have several base cases. The reader may enjoy verifying that the
  claim holds for the triangle and the quadrilateral:
  \begin{align*}
    A(3, 0, 0) & = 3k-6 = B(3, 0, 0), \\
    A(2, 0, 1) & = 2k -4 = B(2, 0, 1), \\
    A(2, 1, 0) & = k - 1 = B(2, 1, 0), \\
    A(4, 0, 0) & = 5k - 9 \leq 6k - 11 = B(4, 0, 0), \\
    A(3, 0, 1) & = 4k - 6 \leq 5k - 9 = B(3, 0, 1), \\
    A(3, 1, 0) & = 3k - 5 \leq 4k - 6 = B(3, 1, 0), \\
    A(2, 0, 2) & = 4k-8 \leq 4k-7 = B(2,0,2), \\
    A(2, 1, 1) & = 3k-5 \leq 3k-4 = B(2,1,1), \text{ and } \\
    A(2, 2, 0) & = 2k-2 \leq 2k-1 = B(2,2,0).
  \end{align*}
    
  The second base case is when $m > 4$ and all vertex degrees are at
  most~$k-2$.  In this case we have $A(h, \lambda, \nu) \leq (k-2)h$.
  If $h \geq 3$ then
  \begin{align*}
    (3k-5)h - (6k-9) - (k-2)h = (2k-3)(h-3) \geq 0,
  \end{align*}
  and the claim holds.  If $h = 2$ then $\lambda+\nu > 2$ and so
  \begin{align*}
    A(2, \lambda, \nu) & \leq 2k - 4 \leq 2k-1 = B(2, 2, 0) \leq
    B(2, \lambda, \nu).
  \end{align*}
    
  The third base case is when $m > 4$ and $h = 2$, and the two heavy
  vertices are adjacent.  Let $v_1$ and $v_2$ denote the two heavy
  vertices. No arrows start from the other
  vertices~$v_3,\ldots,v_m$. If no arrows starting from $v_1$ and
  $v_2$ intersect each other, then
  \begin{align*}
    A(2,\lambda,\nu) & \leq (k-1)(m-2) = (k-1)\lambda+(k-1)\nu
    \leq B(2,\lambda,\nu).
  \end{align*}
  Assume now that there is at least one arrow from $v_1$ that
  intersects an arrow from~$v_2$. Let $x_1$ denote the rightmost arrow
  from~$v_1$ and let $x_i$ (for $i > 1$) denote the $i$\th arrow from
  $x_1$ in counter-clockwise order.  Similarly, let $y_1$ denote the
  leftmost arrow from~$v_2$ and let $y_i$ (for $i > 1$) denote the
  $i$\th arrow from~$v_2$ in clockwise order. We claim that only the
  arrows $x_1,\ldots,x_{k-1}$ from $v_1$ can intersect the arrows
  $y_1,\ldots,y_{k-1}$ from $v_2$. Indeed, $x_1$ cannot intersect more
  than $k-1$ arrows starting from $v_2$, since otherwise, it forms a
  $k$-fan crossing. If $x_1$ intersects $k-1$ arrows from $v_2$, it
  has to intersect $y_1,\ldots,y_{k-1}$. Since $x_1$ is the rightmost
  arrow from $v_1$, $y_j$ for $j \geq k$ must lie on the right side
  of~$x_1$, which implies that $y_j$ cannot intersect any arrow
  from~$v_1$. Similarly, since $y_1$ is the leftmost arrow from $v_2$,
  $x_j$ for $j \geq k$ cannot intersect any arrow from $v_2$, proving
  the claim. This implies that there can exist at most $(k-1)(m-2) +
  (k-1) = (k-1)(m-1)$ arrows in total, and so
  \begin{align*}
    & \hspace{-1cm}%
    [(3k-5)(2+\nu+\lambda)+(k-1)\nu+(2k-3)\lambda-(6k-9)] -
    (k-1)(1+\nu+\lambda)  \\
    & = (2k-4)(1+\nu+\lambda) + (k-1)\nu+(2k-3)\lambda - (3k-4)
    \geq 0,
  \end{align*}
  for $k \geq 3$ and $\nu+\lambda \geq 3$, and thus $A(2,\nu,\lambda)
  \leq B(2,\nu,\lambda)$.
    
  We now turn to the inductive step. Consider an $m$-star~$\face$ with
  $A(h, \lambda, \nu)$ arrows, where $m > 4$, $h \geq 2$, at least one
  vertex has degree at least~$k-1$, and if $h = 2$ then the two heavy
  vertices are not adjacent.
    
  We consider first the case where there are no long arrows: all
  arrows are short.  By renumbering, we can assume that $v_2$ has
  degree $a_2 \geq k-1$. This implies $a_{13} = 0$ and $a_{31} = 0$,
  see Figure~\ref{fig:radial_free_short}.
  \begin{figure}[t]
    \centerline{\includegraphics{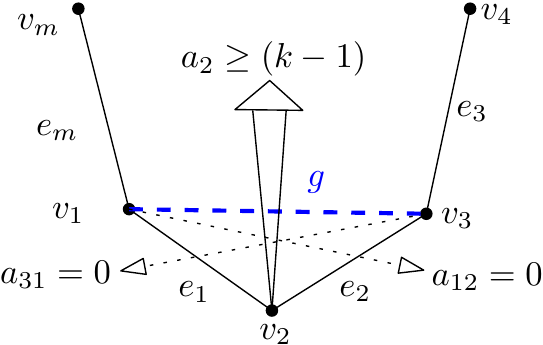}}
    \caption{When all arrows are short.}
    \label{fig:radial_free_short}
  \end{figure}
  We now construct an $(m-1)$-star~$\faceA$ by deleting~$v_2$ and its
  arrows, and inserting the edge~$g = v_1v_3$.  Since $\face$ has only
  short arrows, the only arrows intersecting $g$ are~$a_{m1} \leq k-1$
  arrows starting in~$v_m$ and $a_{42} \leq k-1$ arrows starting
  in~$v_4$.  Since $m > 4$ the vertices $v_m$ and $v_4$ are distinct,
  and this implies that $\faceA$ has no $k$-fan-crossing.  The
  vertex~$v_2$ is obviously heavy and has $a_2 \leq 2(k-1)$.  If $v_3$
  is left-light in~$\face$, it is still left-light in~$\faceA$
  since~$a_{13} = 0$.  Similarly, if $v_1$ is right-light in~$\face$,
  it remains so in~$\faceA$.  (It cannot happen that $v_1$ is
  right-light and $v_3$~is left-light in~$\face$, as then $a_2 = 0 <
  k-1$.)  We thus have
  \begin{align*}
    A(h, \lambda, \nu) & \leq 2k-2 + B(h-1, \lambda, \nu) \leq
    B(h, \lambda, \nu).
  \end{align*}
  
  It remains to consider the case where $\face$ has long arrows.  Let
  $e$ denote the \emph{shortest} long arrow in~$\face$ and let $\ell$
  denote its length. Renumbering the vertices we can assume that
  $e$~starts in~$v_1$ and exits through edge~$e_{\ell+1} =
  v_{\ell+1}v_{\ell+2}$. See Figure~\ref{fig:radial_free_long}.
  \begin{figure}[t]
    \centerline{\includegraphics{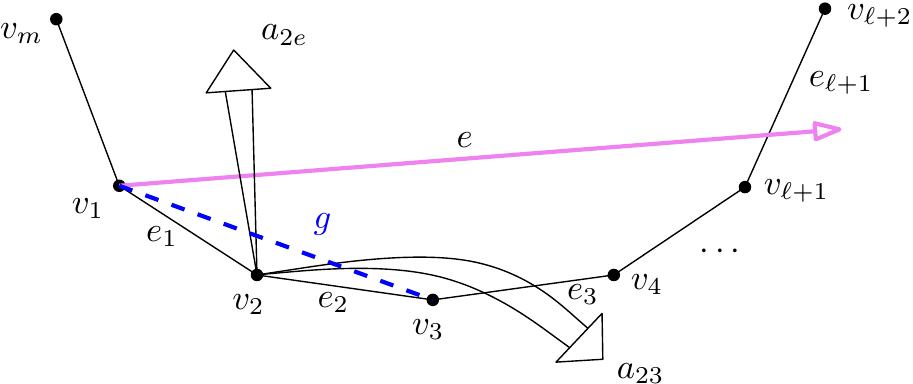}}
    \caption{$e$ is the shortest long arrow.}
    \label{fig:radial_free_long}
  \end{figure}
  The following property holds:
  \begin{compactenum}
  \item[(A)] Every long arrow starting in
    $v_2,\ldots,v_{\ell+1}$ must cross $e$, as otherwise it
    would be shorter than $e$.
  \end{compactenum}
  Let $a_{2e}$ denote the number of arrows starting in~$v_2$ that
  cross~$e$. These arrows cannot form a $k$-fan, so we have $a_{2e}
  \leq k-1$.  An arrow starting in~$v_2$ that does not cross $e$ is
  short by~(A) and must exit through~$e_3$. This implies that $a_{2} =
  a_{2e} + a_{23} \leq 2(k-1)$.
    
  We now create an $(m-1)$-star~$\faceA$ by deleting~$v_2$ and all its
  arrows, deleting the $a_{12}$ arrows starting in~$v_1$ and exiting
  through~$e_3$, deleting the $a_{31}$ arrows starting in~$v_3$ and
  exiting through~$e_1$, and inserting the new edge~$g=v_1v_3$.
    
  If $\faceA$ has a $k$-fan crossing, it must involve the new
  edge~$g$. There are three ways in which this could happen:
  \begin{compactenum}[\qquad \bf {Case}~(a):]
  \item $g$ and $k-1$ arrows starting in
    $v_1$ are intersected by an arrow~$x$ that also
    intersects~$e_2$. See left side of Figure~\ref{fig:create-fan}.
    
  \item $g$ and $k-1$ arrows starting in $v_3$ are intersected by an
    arrow~$y$ that also intersects $e_1$. See right side of
    Figure~\ref{fig:create-fan}.
    
  \item $g$ intersects $k$ arrows starting in the same vertex.
  \end{compactenum}

  \begin{figure}[t]
    \centerline{%
      \hfill%
      \includegraphics{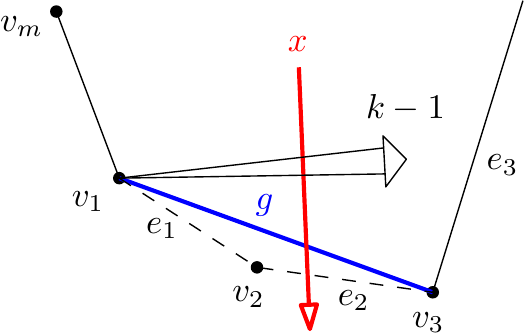}%
      \hfill%
      \includegraphics{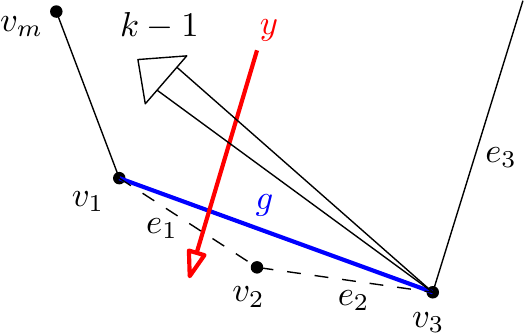}%
      \hfill%
    }
    \caption{Two cases where $g$ creates a $k$-fan in~$\faceA$.}
    \label{fig:create-fan}
  \end{figure}
  We first observe that case~(c) cannot happen: If the $k$ arrows
  intersecting $g$ also intersect~$e$, they form a $k$-fan crossing in
  $\face$. If there is an arrow that does not intersect~$e$, it must
  start in one of $v_4,\ldots,v_{\ell+1}$, and so it is short
  by~(A). To intersect $g$, these arrows must start in $v_4$ and exit
  through $e_2$, and there are at most $(k-1)$ such arrows.
    
  If case~(a) happens, we delete from~$\faceA$ one more arrow, namely,
  the ``lowest'' arrow starting in~$v_1$. If case~(b) happens, we take
  out the ``lowest'' arrow starting in $v_3$.  This ensures that
  $\faceA$ has no $k$-fan crossing. The inductive assumption holds
  for~$\faceA$, since it has $m-1$~vertices and at least two heavy
  vertices.  The latter follows since $v_1$ has at least the arrow~$e$
  and is therefore heavy, so $v_1$ and~$v_2$ cannot be the two only
  heavy vertices of~$\face$.
    
  We set $t_x = 1$ if case~(a) happens, and $t_x = 0$ otherwise;
  similarly we set $t_y = 1$ if case~(b) happens and $t_y = 0$
  otherwise.  Or original face~$\face$ has
  \[
  \Delta := a_{2e} + a_{23} + a_{12} + a_{31} + t_x + t_y
  \]
  arrows more than the new face~$\faceA$.
    
  We collect a few more properties:
  \begin{compactenum}
  \item[(B)] If $a_{23} = k-1$ then case~(a) cannot happen: Since $x$
    exits through~$e_2$, it intersects the $k-1$ arrows starting
    in~$v_2$ and exiting through~$e_3$, and in addition the edge~$e_2$
    incident to~$v_2$.
        
  \item[(C)] If $a_{23} \geq 1$ then case~(b) cannot happen: An arrow
    starting in~$v_2$ and exiting through~$e_3$ intersects all arrows
    starting in~$v_3$ as well as the edge~$e_3$, and so $a_3 < k-1$.
    But $v_3$ needs to have~$k-1$ arrows for case~(b) to occur.
        
  \item[(D)] If $a_{31}=k-1$ then case~(a) cannot happen: Since $x$
    exits through~$e_2$, it intersects the $k-1$ arrows starting
    in~$v_3$ and exiting through~$e_1$, and in addition the edge~$e_2$
    incident to~$v_3$.
        
  \item[(E)] If $a_{12} \geq k-2$ then case~(b) cannot happen: If $y$
    intersects~$e$, then it intersects~$e$, $e_1$, and $k-2$ arrows
    from~$v_1$ to~$e_2$, a $k$-fan crossing.  If $y$ does not
    intersect~$e$, then $y$ must start in $v_2, \ldots, v_{\ell+1}$,
    and so by~(A), $y$~is short. To cross~$e_1$, $y$~must therefore
    start in~$v_3$, and so~$y$ does not cross~$g$.
  \end{compactenum}

  We now distinguish several cases to bound~$\Delta$:
  \paragraph{Case 1:} $a_{2} =a_{2e}+a_{23} \geq k-1$.  In this
  case, $a_{12} = a_{31} = 0$, and so $\Delta = a_{2} + t_x +
  t_y$. Since $a_{2e} \leq k-1$, we have by~(B) and~(C):
  \[
  \begin{array}{lccccccccl}
    & & & a_{2} & & t_x & & t_y & & \\
    \textrm{If $a_{23} = k-1$:\qquad} & \Delta & \leq &
    2(k-1) & + & 0 & +
    & 0 & = & 2k-2. \\
    \textrm{If $1 \leq a_{23} \leq k-2$:\qquad} & 
    \Delta & \leq & (2k-3) &+& 1 &+& 0  &=& 2k-2. \\
    \textrm{If $a_{23} = 0$:\qquad} & \Delta & \leq  &(k-1) &+& 1 &+& 1
    &=& k+1 \leq 2k-2. 
  \end{array}
  \]
    
  \paragraph{Case 2:} $a_{2} \leq k-2$ and $\max(a_{12},
  a_{31})=k-1$.  If $a_{12} = k-1$ then $a_{31} = 0 $, and if
  $a_{31}=k-1$ then $a_{12}=0$.  By (D) and (E), we have
  \begin{align*}
    \Delta & = a_{2} + (a_{12}+a_{31}) + (t_x + t_y) \leq (k-2) +
    (k-1) + 1 = 2k-2.
  \end{align*}
  
  \paragraph{Case 3:} $\max\{a_{2},a_{12},a_{31}\} \leq k-2$.
  By~(E), $a_{12} = k-2$ implies $t_y = 0$, and so
  \[
  \begin{array}{lccccccccccccl}
    & & & a_{2} & & a_{12} & & a_{31} & & t_x & & t_y & & \\
    \text{If $a_{12} = k-2$:} &
    \Delta & \leq & k-2 &+& k-2 &+& k-2 &+& 1 &+& 0 &=& 3k-5. \\
    \text{If $a_{12} < k-2$:} &
    \Delta & \leq & k-2 &+& k-3 &+& k-2 &+& 1 &+& 1 &=& 3k-5.
  \end{array}
  \]
  
  We now have all the ingredients to complete the inductive argument.
  Clearly $v_1$ is heavy because of the arrow~$e$. We distinguish the
  types of~$v_2$ and~$v_3$ (note that case~1 can occur only when~$v_2$
  is heavy):
  \begin{itemize}
  \item If $v_2$ is heavy and $v_3$ is not left-light, then we
    have
    \begin{align*}
      A(h, \lambda, \nu) & \leq \max(2k-2, 3k-5) + B(h-1,
      \lambda, \nu) = B(h, \lambda, \nu).
    \end{align*}
  \item If $v_2$ is heavy and $v_3$ is left-light, then $v_3$
    might become void in~$\faceA$.  In this case~$a_{31} = a_{23}
    = 0$.  The bound in case~3 improves to~$\Delta \leq 2k-3 \leq
    2k-2$:
    \begin{align*}
      A(h, \lambda, \nu) & \leq 2k-2 + B(h-1, \lambda-1, \nu+1)
      = B(h, \lambda, \nu).
    \end{align*}
  \item If $v_2$ is right-light, then $a_2 = a_{31} = 0$.  If
    $v_2$ is left-light and $v_3$ is not left-light, then $a_2 =
    a_{12} = 0$.  Either way, in case~2 the bound improves to
    $\Delta \leq k$, and in case~3 it improves to $\Delta \leq
    k-1$. We have
    \begin{align*}
      A(h, \lambda, \nu) & \leq k + B(h, \lambda-1, \nu) = B(h,
      \lambda, \nu).
    \end{align*}
  \item If $v_2$ and $v_3$ are both left-light, then $v_3$ might
    become void in~$\faceA$.  We have $a_2 = a_3 = a_{12} = 0$.
    We are thus in case~3 and have the improved bound $\Delta \leq
    2$.  This gives
    \begin{align*}
      A(h, \lambda, \nu) & \leq 2 + B(h, \lambda-2, \nu + 1) <
      B(h, \lambda, \nu).
    \end{align*}
  \item If $v_2$ is void and $v_3$ is not left-light, then $a_2
    = 0$.  In case~2 this implies~$\Delta \leq k$, in case~3 it
    implies $\Delta \leq 2k-3$. Since $k \leq 2k-3$ we have
    \begin{align*}
      A(h, \lambda, \nu) & \leq 2k - 3 + B(h, \lambda, \nu-1) =
      B(h, \lambda, \nu).
    \end{align*}
  \item Finally, if $v_2$ is void and $v_3$ is left-light, then
    $v_3$ might become void in~$\faceA$.  We have $a_2 = a_3 = 0$.
    In case~2 this implies~$\Delta \leq k$, in case~3 it implies
    $\Delta \leq k-1$. We have
    \begin{align*}
      A(h, \lambda, \nu) & \leq k + B(h, \lambda-1, \nu) = B(h,
      \lambda, \nu).
    \end{align*}
  \end{itemize}
  This completes the inductive step.
\end{proof}

\section{The general bound}
\label{sec:general}

We now prove Theorem~\ref{theo:gmain}.  The proof makes use of the
following lemma by Pach \etal \cite{pss-acn-94}:
\begin{lemma}[{{\cite[Theorem~2.1]{pss-acn-94}}}]
  \label{lemma:pach}%
  Let $\Graph$ be a graph with $n$ vertices of degree~$d_1,\dots,d_n$
  and crossing number~$\chi$.  Then there is a subset $E$ of $b$ edges
  of $\Graph$ such that removing $E$ from $\Graph$ creates components
  of size at most~$2n/3$, and
  \begin{align*}
    b^{2} \leq (1.58)^{2}\big(16\chi + \sum_{i = 1}^{n}
    d_{i}^{2}\big).
  \end{align*}
\end{lemma}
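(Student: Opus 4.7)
The plan is to follow the classical planarization-plus-separator strategy that is standard for bounding the crossing number of a graph by a separator argument.

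Step 1 (Planarization). Fix a drawing of $\Graph$ realizing its crossing number $\chi$ and replace each crossing point by a new degree-$4$ ``dummy'' vertex, subdividing both crossed edges. This produces a planar graph $\Graph^*$ on $n + \chi$ vertices in which every original vertex $v_i$ keeps its degree $d_i$, while every dummy vertex has degree $4$. Hence the sum of squared degrees of $\Graph^*$ is exactly
\[
\sum_{v \in V(\Graph^*)} \deg_{\Graph^*}(v)^2 \;=\; \sum_{i=1}^n d_i^2 + 16\chi.
\]

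Step 2 (Weighted planar edge separator). Invoke the weighted planar edge-separator theorem (a refinement of Lipton-Tarjan due to Djidjev, whose sharp form is what sits at the heart of the PSS argument). Assign weight $1$ to every original vertex and weight $0$ to every dummy vertex. The theorem then produces an edge set $E^* \subseteq E(\Graph^*)$ whose removal leaves every component with total weight at most $2n/3$, hence with at most $2n/3$ original vertices, together with the quantitative bound
\[
|E^*|^2 \;\le\; (1.58)^2 \sum_{v \in V(\Graph^*)} \deg_{\Graph^*}(v)^2 \;=\; (1.58)^2\Bigl(16\chi + \sum_{i=1}^n d_i^2\Bigr).
\]

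Step 3 (Pull back to $\Graph$). Each edge of $\Graph^*$ is a fragment of a unique edge of $\Graph$; mapping each fragment in $E^*$ back to its parent edge and deduplicating yields an edge set $E \subseteq E(\Graph)$ with $|E| \le |E^*|$. Because $E^*$ separated the original vertices in $\Graph^*$ into parts of size at most $2n/3$, removing $E$ from $\Graph$ separates $V(\Graph)$ in exactly the same way. Setting $b = |E|$ immediately yields $b^2 \le (1.58)^2(16\chi + \sum_i d_i^2)$.

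The main obstacle is Step 2: the textbook Lipton-Tarjan theorem only provides a vertex separator of size $O(\sqrt{N})$, and turning this into an edge separator by the naive device of deleting all incident edges and applying Cauchy-Schwarz gives the weaker bound $|E^*|^2 = O(\sqrt{N}) \sum_v \deg(v)^2$, not the desired $O(\sum_v \deg(v)^2)$. Getting the tight form either requires the weighted edge-separator bound of Djidjev (respectively Gazit-Miller), or, equivalently, a careful choice of weights proportional to degrees when applying the vertex separator; both routes deliver the claimed constant $1.58$. Once such a separator is in hand, the planarization and pullback are routine.
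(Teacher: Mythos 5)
The paper does not actually prove this lemma: it is imported verbatim as Theorem~2.1 of Pach, Shahrokhi and Szegedy \cite{pss-acn-94}, so there is no internal proof to compare against. Your reconstruction is nevertheless essentially the argument of that source, and it is correct: planarize an optimal drawing so that each of the $\chi$ dummy crossing vertices contributes $4^2=16$ to the sum of squared degrees (whence the $16\chi$ term); apply the degree-weighted planar \emph{edge}-separator theorem (Gazit--Miller) with weight $1$ on original vertices and $0$ on dummy vertices, which is exactly where the constant $1.58\approx\sqrt{2.5}$ comes from; and pull the separating fragments back to their parent edges. Your Step~3 is sound: if an edge $uv$ of $\Graph$ is not placed in $E$, then none of its fragments lies in $E^*$, so every component of $\Graph\setminus E$ embeds into a component of $\Graph^*\setminus E^*$ and inherits the $2n/3$ bound on the number of original (weight-one) vertices. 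You are also right to flag the one genuine pitfall, namely that the unweighted Lipton--Tarjan vertex separator does not suffice and would introduce a spurious $\sqrt{n+\chi}$ factor; the weighted edge-separator form is genuinely needed. I see no gap.
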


\begin{proof}[Proof of Theorem~\ref{theo:gmain}]
  Let $\Graph$ be a graph on~$n$ vertices with $m$ edges having
  property~$\edgepro$.  Since each edge~$e$ crosses a graph that has
  property~$\grapro$, the crossing number of~$\Graph$ is at most
  $\chi\leq O(mn^{1+\alpha})$.  The degree of any vertex is bounded
  by~$n-1$, and so we have $d_{i}^{2} \leq n \cdot d_i$.  It follows
  using Lemma~\ref{lemma:pach} that there exists a set~$E$ of~$b$
  edges in~$\Graph$ such that
  \begin{align*}
    b^{2} & \leq O(\chi + \sum_{i=1}^{n} d_i^{2}) \leq
    O(mn^{1+\alpha} + n \sum_{i=1}^{n}d_{i}) \leq O(mn^{1+\alpha}
    + mn) \leq O(mn^{1+\alpha}),
  \end{align*}
  and removing $E$ from $\Graph$ results in components of size at
  most~$2n/3$.
    
  We recursively subdivide $\Graph$.  Level~$0$ of the subdivision is
  $\Graph$ itself.  We obtain level~$i+1$ from level~$i$ by
  decomposing each component of level~$i$ using
  Lemma~\ref{lemma:pach}.
    
  Consider a level~$i$. It consists of $k$~components
  $G_{1},\dots,G_{k}$.  Component~$G_{j}$ has $n_{j}$ vertices and
  $m_{j}$ edges, where $n_{j}\leq (\frac 23)^{i}n$.  The total number
  of edges at level~$i$ is $r = \sum_{j=1}^{k} m_{j}$.
    
  Using the Cauchy-Schwarz inequality for the vectors $\sqrt{m_j}$,
  $\sqrt{n_{j}}$, we have
  \begin{align*}
    \sum_{j=1}^{k} \sqrt{m_jn_{j}} \leq \sqrt{\sum_{j=1}^{k}
      m_{j}} \sqrt{\sum_{j=1}^{k} n_{j}} = \sqrt{r n} \leq
    \sqrt{m n}.
  \end{align*}
  
  We first consider the case~$\alpha > 0$. The number of edges needed
  to subdivide $G_{j}$ is $O(\sqrt{m_{j}n_{j}^{1+\alpha}})$.  We bound
  this using $n_{j} \leq (\frac 23)^{i}n$ as
  $O(\sqrt{m_{j}n_{j}}((\frac 23)^{i}n)^{\alpha/2})$, and we obtain
  that the total number of edges removed between levels~$i$ and~$i+1$
  is bounded by $O(\sqrt{mn}((\frac 23)^{i}n)^{\alpha/2})$.  Since
  $(\frac 23)^{\alpha/2} < 1$, summing over all levels results in a
  geometric series, and so the total number of edges removed is
  $O(\sqrt{mn^{1+\alpha}})$.  But this implies that the total number
  of edges in the graph is bounded as
  \begin{align*}
    m \leq O(\sqrt{mn^{1+\alpha}}),
  \end{align*}
  and squaring both sides and dividing by~$m$ results in
  \begin{align*}
    m \leq O(n^{1+\alpha}).
  \end{align*}
  
  Next, consider the case $\alpha = 0$. The number of edges removed
  between levels~$i$ and~$i+1$ is bounded by $O(\sqrt{m n})$.  Adding
  over all $O(\log n)$ levels shows that
  \begin{align*}
    m & \leq O(\sqrt{mn} \log n).
  \end{align*}
  Again, squaring and dividing by~$m$ leads to
  \begin{align*}
    m & \leq O(n \log^{2} n). \qedhere
  \end{align*}
  \aftermathA
\end{proof}

\section{Conclusions}
\label{sec:conclusions}

We have proven bounds on the number of edges of $k$-fan-crossing free
graphs.  For $k = 2$ our bound is tight, and we could even
characterize the extremal graphs.  In comparison, the bound we obtain
for $k > 2$ is much weaker.  For $k = 3$, for instance, our bound
is~$6n-12$.  The best lower bound we are aware of is the construction
of Pach and \Toth~\cite{pt-gdfce-97} of a 2-planar graph with
$5n-10$~edges.  For general~$k \geq 2$, a lower bound of $n(k-1)/2$
follows by considering a $(k-1)$-regular graph, that is a graph where
every vertex has degree~$k-1$.  Clearly it is $k$-fan-crossing free
and has $n(k-1)/2$ edges.  For $k \ll n$ this can be improved by a
factor two: Consider a $\sqrt{n} \times \sqrt{n}$ integer grid of
vertices.  Connect every vertex to $2(k-1)$ neighbors within a
$O(\sqrt{k}) \times O(\sqrt{k})$ grid in a symmetric way---that is,
whenever we connect $(x,y)$ with $(x+s, y + t)$, we also connect with
$(x-s, y-t)$. Then no line can intersect more than $k-1$ edges
incident to a common vertex, and so the graph is $k$-fan-crossing
free.  Except for vertices near the boundary of the grid, every vertex
has degree~$2(k-1)$, so the number of edges is $(k-1)(n -
O(\sqrt{nk})) \approx kn$.  By contrast, our upper bound is $\approx
3kn$.

The weakness in our technique is that it analyzes arrows separately
for every face of~$H$.  When $H$ is a triangulation, it has $3n-6$
edges and $2n-4$ triangles.  Each triangle can have $3k-6$ arrows, so
we could have $(2n-4)(3k-6)$ arrows, implying $3n-6 + (2n-4)(3k-6)/2 =
3(k-1)(n-2)$ edges.  Our bound is thus the best bound that can be
obtained with this technique.  For $k = 3$, it is possible to improve
it slightly by observing that two adjacent triangles of~$H$ can only
have four arrows in total.

A $(k-1)$-planar graph is automatically $k$-fan-crossing
free. Lemma~\ref{lemma:maximal-quad} implies that an extremal
$2$-fan-crossing free graph is $1$-planar.  Is the same statement true
for $k=3$?  It certainly doesn't hold for large~$k$, as Pach and
\Toth's bound on the number of edges of $k$-planar graphs is
only~$O(\sqrt{k}n)$~\cite{pt-gdfce-97}.  Already for $k=4$,
$3$-planarity is a stronger condition than absence of a $4$-fan
crossing: Pach \etal~\cite{prtt-iclfm-06} showed that $3$-planar
graphs have at most $5.5(n-2)$ edges.  A $4$-fan-crossing free graph
with~$6n-12$ edges can be constructed by starting with a
triangulation, and adding the ``dual'' of every edge: for every pair
of adjacent triangles, connect the two vertices not shared between the
triangles.
 
The crossing number of a $1$-planar graph on $n$~vertices is at most
$n-2$ (this implies the bound $4n-8$ on the number of edges).  In
contrast, the crossing number of a fan-crossing free graph can be
quadratic.  For instance, start with the complete graph~$K_q$ on
$q$~vertices.  It has $q \choose 2$ edges and crossing
number~$\Omega(q^{4})$.  Now subdivide every edge into three edges by
inserting two vertices very close to the original vertices.  The
resulting graph is fan-crossing free and has $n = q + 2{q \choose 2}$
vertices. Since any drawing of this graph can be converted into a
drawing of~$K_q$ with the same number of crossings, it has crossing
number $\Omega(q^{4}) = \Omega(n^{2})$. (The same construction could
be done with any graph whose crossing number is quadratic in the
number of edges, such as an expander graph or even a random graph.) 

A natural next question to ask is if our techniques can be used for
graphs that do not contain a radial $(p,q)$-grid for $q > 1$, and if
we can find tighter bounds than Pach \etal~\cite{ppst-tgnlg-05}.

In Theorem~\ref{theo:gmain}, we have given a rather general bound on
the number of edges of graphs that exclude certain crossing patterns.
The theorem shows that for graph properties~$\grapro$ that imply that
the number of edges grows as $\Theta(n^{1+\alpha})$, for $\alpha > 0$,
the size of the entire graph is bounded by the same function.  For the
interesting case $\alpha = 0$, which arises for instance for
fan-crossing free graphs, our bound includes an extra $\log^{2}
n$-term.  Is this term an artifact of our proof technique, or are the
examples of graph properties where $\grapro$ implies a linear number
of edges, but graphs with~$\edgepro$ and a superlinear number of edges
exist?

\paragraph{Acknowledgments.}
For helpful discussions, we thank David Eppstein, J\'anos Pach, Antoine
Vigneron, Yoshio Okamoto, and Shin-ichi Tanigawa, as well as the other
participants of the Korean Workshop on Computational Geometry~2011 in
Otaru, Japan.

\bibliographystyle{alpha}
\bibliography{fcproof}

\end{document}